\documentclass{kais}
\usepackage[utf8]{inputenc}
\usepackage{amssymb}
\usepackage{graphicx}

\usepackage{amsmath}
\usepackage{algorithm}
\usepackage{algpseudocode}
\usepackage{booktabs}
\usepackage{cite}
% \usepackage{caption}
% \DeclareCaptionType{copyrightbox}
% \usepackage{subcaption}
\usepackage{multirow}
\usepackage{chngpage}

\usepackage{url}
% \urldef{\mailsa}\path|{ibrahim, sastry}@ee.iisc.ernet.in|
% \urldef{\mailsb}\path|{ssastry}@uakron.edu|
% \newcommand{\keywords}[1]{\par\addvspace\baselineskip\noindent\keywordname\enspace\ignorespaces#1}

\newcommand{\scrC}{\ensuremath{\mathcal{C}}}
\newcommand{\scrD}{\ensuremath{\mathcal{D}}}
\newcommand{\scrF}{\ensuremath{\mathcal{F}}}

\newcommand{\scrA}{\ensuremath{\mathcal{A}}}

\newtheorem{problem}{Problem}
\newtheorem{proposition}{Proposition}
\renewcommand{\algorithmicrequire}{\textbf{Input: }}
\renewcommand{\algorithmicensure}{\textbf{Output: }}

\DeclareMathOperator*{\argmin}{arg\,min}
\DeclareMathOperator*{\argmax}{arg\,max}

\begin{document}

%\CopyrightYear{2007} % Allows default copyright year (20XX) to be over-ridden - IF NEED BE.
%\crdata{0-12345-67-8/90/01}  % Allows default copyright data (0-89791-88-6/97/05) to be over-ridden - IF NEED BE.
% --- End of Author Metadata ---

\title{Discovering Compressing Serial Episodes from Event Sequences}
\author[Ibrahim. A et al]{Ibrahim A$^1$, Shivakumar Sastry$^2$, P.S. Sastry$^1$
\\ $^1$Department of Electrical Engineering, Indian Insitute of Science, Bangalore\\ $^2$Department of Electrical and Computer Engineering
Unversity of Akron}

\maketitle
\begin{abstract}
Most pattern mining methods yield a large number of frequent patterns and isolating a small, relevant subset of patterns is a challenging problem of
current interest. In this paper we address this problem in the context of discovering frequent episodes from symbolic time series data. Motivated 
by the Minimum Description Length principle, we formulate the problem of selecting relevant subset of patterns as one of searching for a subset of
patterns that achieves best data compression. We present algorithms for discovering small sets of relevant non-redundant episodes that achieve good
data compression. The algorithms employ a novel encoding scheme and use serial episodes with inter-event constraints as the patterns. We present
extensive simulation studies with both synthetic and real data, comparing our method with the existing schemes such as GoKrimp and SQS. We also
demonstrate the effectiveness of these algorithms on event sequences from a composable conveyor system; this system represents a new application area
where use of frequent patterns for compressing the event sequence is likely to be important for decision-support and control. 
\end{abstract}
\begin{keywords}
 Frequent Episodes, Serial Episodes, Mining Event Sequences, Discovering Compressing 
Patterns, MDL, Inter-Event-Time constraints.
\end{keywords}

% \keywords{Frequent Episodes, Serial Episodes, Mining Event Sequences, Discovering Compressing 
% Patterns, MDL, Inter-Event-Time constraints.}

\section{Introduction}
Frequent pattern mining is an important problem in the area of data mining that has diverse applications in a variety of
domains~\cite{han2007frequent}. Even though many algorithms have been proposed for frequent pattern mining, most of these methods produce a large
number of frequent patterns. In addition, the patterns found are often redundant in the sense that many patterns are very similar. The redundancy and
the large volume of the patterns discovered makes it difficult to use the mined patterns to gain useful insights into the data or to use them to
extract rules which are effective for prediction, classification etc., in the application domain. Thus,  finding a small set of non-redundant,
relevant and informative patterns that succinctly characterize the data, is an important problem of current interest. 

There are many methods that are proposed for reducing the number of extracted frequent patterns. 
% Many such methods concentrate on eliminating patterns that can be inferred to be frequent given the remaining frequent patterns. 
Many such methods concentrate on eliminating patterns that are deemed to be non-informative given the other frequent patterns. For example, in the
context of transaction datasets, concepts such as closed~\cite{pasquier1999discovering,wang2003closet+}, non-derivable~\cite{calders2002mining} and
maximal~\cite{lin2002pincer,burdick2005mafia} itemsets were suggested to reduce the number of frequent itemsets extracted. Similarly,  closed
sequential patterns were proposed for sequence datasets~\cite{yan2003clospan,wang2004bide,casas2005summarizing}. Even though such methods result in
some reduction in the number of patterns returned by the algorithm, the number of patterns still remains substantial. Also, the redundancy in the
final set of patterns is, often, still large.

Recently, there have been other efforts for finding a small set of informative patterns that best describes the data. For example, 
\cite{chandola2007summarization} proposes a method for summarization of transaction datasets based on some ideas from information theory. They propose
a method of selecting a subset of frequent itemsets to achieve a good lossy summarization of the database. Here each transaction is summarized by one
itemset with as little loss of information as possible. In~\cite{wang2006efficiently}, which also proposes a lossy summarization, each
transaction is covered, partially, by the largest frequent itemset. In contrast to these methods,  ~\cite{siebes2006item,vreeken2011krimp} propose
lossless summarization of transaction datasets using the Minimum Description Length~(MDL) principle. A related approach called {\it Tiling} was used
by~\cite{geerts2004tiling,xiang2008succinct}, again for a lossless summarization of the data.  

In this paper, we address the problem of discovering a set of patterns that can achieve succinct lossless representation of temporal sequence data. We
present algorithms that discover a small set of relevant patterns (which are special forms of serial episodes) which summarize the data well. We use
the  MDL principle \cite{grunwald2007minimum} to define what we mean by summarizing the data well. The basic idea is that a set of patterns
characterizes or summarizes the data sequence well, if the set of patterns can be used as a model to encode the data to achieve good compression. 

As mentioned above, the MDL principle has been used earlier to obtain relevant and non-redundant subsets of frequent patterns. The idea was first
explored by the Krimp algorithm \cite{vreeken2011krimp} in the context of transaction data. This algorithm selects a subset of frequent itemsets
which, when used for encoding the database, achieves good compression. Each selected itemset is assigned a code with shorter code lengths assigned to
higher frequency itemsets. The algorithm tries to encode each transaction with the codes of itemsets which have minimal code lengths and which cover
maximum number of items.

Similar strategies have been proposed for sequence data also~\cite{tatti2012long,lam2012mining,lam2014mining}. For sequential data, unlike in the case
of transaction data, the temporal ordering is important and this presents additional complications while encoding the data. For example, consider a
single transaction, $t = ABCD$ from a transaction database and two itemsets $AC$ and $BD$. The codewords for $AC$ and $BD$ can encode the transaction
$t$ (since the transaction is just a set of items). Now consider a sequence $s = ABCD$ and  two  serial episodes $A \rightarrow  C$  and $B
\rightarrow D$.  Even  though the  occurrences of $A \rightarrow C$ and $B \rightarrow D$ would cover the  sequence $s$, this information alone is
insufficient for encoding the sequence. Since the order of events is important in sequential data, in order to get back the exact sequence, one needs
to specify where exactly the occurrences of the episodes  happen in the sequence. For example, we need to know that the $A$ and $C$ in the occurrence
of  $A \rightarrow  C$ are not contiguous and that there is a $B$ in the gap between them. One  needs to have some way of taking care  of such gaps
while encoding the data with the occurrences of some frequent episodes. In general, the events in a sequence constituting an episode occurrence need
not be contiguous, and different occurrences can have arbitrary temporal overlaps. An encoding scheme should be able to properly take care of this. 

The previous approaches for using the MDL principle to summarize sequence data~\cite{tatti2012long, lam2012mining, lam2014mining}, explicitly record
such gaps while encoding data, thus significantly increasing the encoding length. While the methods presented in~\cite{tatti2012long, lam2014mining}
consider only  sequential data without time stamps, the method in~\cite{lam2012mining} does encode event sequences with time stamps also; but the
encoding scheme needs to individually encode each event time stamp. In some cases, the resulting encoding may become even longer than the raw
data~\cite{lam2012mining}. For the problem of identifying a relevant subset of frequent patterns, we are using the encoded length (of the data encoded
with a subset of patterns), only as a figure of merit to compare different subsets. Hence, the fact of the encoding length becoming more than the raw
data is, per se, not disallowed. However, the underlying philosophy of MDL principle suggests that one needs a good level of data compression to have
confidence in a model. For example, even if the sequence data is {\em iid} noise and has no temporal structure, there would be some subset of patterns
that would achieve lower encoded data length than other subsets. However, one expects that even the best such subset here would not achieve any
appreciable level of data compression, thus suggesting that there are no significant temporal regularities in the data. On the other hand, for a
sequence with significant temporal regularities, one expects good compression of the data sequence, if the method is able to discover the best
temporal patterns and encode the sequence with them. In general, if we can discover some long episodes which occur many times, then their occurrences
can encode many events in the data sequence thus giving rise to the possibility of data compression. 

In this paper, we consider summarizing event sequences (having time stamps on events) using a pattern class consisting of serial episodes with fixed
inter-event times. We present algorithms for discovering a small subset of relevant frequent episodes that result in good compression of the data
sequence. The novelty of our approach is that, in contrast to the existing schemes in \cite{tatti2012long,lam2012mining,lam2014mining}, our method
does not need to explicitly encode gaps in episode occurrences and the encoding scheme is such that we can retrieve the full data sequence with the
time stamps on events, from the encoded sequence. The encoding of the data consists of only the start times of occurrences of various episodes; the
gaps are determined from the fixed inter-event time constraints of the episodes. We show through simulations that our method results
in better data compression. We also show, through empirical experiments, that the episodes that result in good data compression are also highly
relevant for the dataset. 

We also illustrate the benefits of our algorithm using an application, where it is important to both find relevant patterns and achieve good data
compression.  We consider streams of sensor-data from a composable conveyor system (CCS)~\cite{asrr2009,sks2006} that is useful for materials
handling. In this system, several conveying units are dynamically composed to achieve the application objectives; consequently, utilizing the data
streams to diagnose or reconfigure the system is important. The data consists of a sequence of predefined events such as, {\em package entered a
unit}, {\em package exited a unit}, {\em package arrived at an input port}, etc. Such events  occur at various units in the conveyor system during its
routine operation.  On this data stream, frequent serial episodes represent the routes (sequence of units) over which packages were transported in the
conveyor system. The inter-event times corresponds to the various physical constraints such as time required for a package to move through a specific
unit, the time required for two adjacent units to complete a handshake protocol to transfer packages between them etc. Thus a small set of relevant
episodes can provide a good summary of the events in the conveyor system. We can use the discovered set of relevant episodes to achieve a lossless
compression of the original temporal event sequence to support remote monitoring, diagnostics and visualization activities. We explain the system in
more detail in Section~\ref{sec:conveyor}. Using data obtained from a high-fidelity discrete event simulator of such conveyor systems, we demonstrate
that our algorithms: (a) unearth a small set of relevant episodes that capture the essence of the transport through the system, and (b) our scheme
achieves good data compression. 

Even though our method is motivated  by the above application, we show that our method is effective with other general sequential data as well. Apart
from conveyor system data streams, we show the effectiveness of our methods with text data as well as on a few other real data sequences. These are
the data sets that are used to illustrate the effectiveness of the algorithms presented in~\cite{tatti2012long, lam2012mining,lam2014mining}. We
compare the performance of our algorithm with these methods on these data sets as well as on the composable conveyor system data.

The rest of the paper is organized as follows. In Section~\ref{sec:prelims}, we briefly review the formalism of episodes, introduce the new subclass
of serial episodes and formally state the problem. Section~\ref{sec:encoding-expln} describes our encoding scheme for temporal data using our
episodes. The various algorithms for mining and subset selection are explained in Section~\ref{sec:algo} and the experimental results are given in
Section~\ref{sec:expt}. We conclude the paper in Section~\ref{sec:conclude}.

\section{Problem Statement}
\label{sec:prelims}
 \subsection{Fixed Interval Serial Episodes}
%  \sloppypar
The data we consider is a sequence of $n$ events denoted as $\scrD = \langle (E_1, t_1), (E_2,t_2), \newline
\ldots,(E_n, t_n) \rangle$, where $t_i \leq t_{i+1}$, and if $t_i = t_{i+1}$, then $E_i \neq E_{i+1}$, where
$E_i \in \Sigma$, is the event-type, $\Sigma$ is the alphabet and $t_i \in
\mathbb{Z}^+$ is the time stamp of the $i^{th}$ event. Note that we can have multiple events (of different types) all
occurring at the same time instant. A $k$-node serial episode $\alpha$ is denoted as $e_1\rightarrow e_2 \rightarrow
\cdots \rightarrow e_k$ where $e_i \in \Sigma$, $\forall i$.  An occurrence of $\alpha$ in $\scrD$ is a mapping
$h:\{1\ldots k\} \rightarrow \{1\ldots n\}$, such that $e_i = E_{h(i)}, 1 \leq i \leq k$ and $t_{h(i)} < t_{h(j)}$, for
$i < j$. An occurrence can be denoted by $(t_{h(1)}, \ldots, t_{h(k)})$, the event times of the events constituting the
occurrence. 
We call the interval  $[t_{h(1)}, t_{h(k)}]$ as the occurrence window of this occurrence. (If $k = 1$, then for the $1$-node episode, the occurrence 
window is essentially a number which is the event time of that event). 
Consider an example event sequence
\begin{multline}
\label{dataseq}
 \scrD_1 = \langle(A,1),(A,2),(B,3),(E,4),(A,5),(B,6),(C,6),(B,7),(D,8),\\ (C,10),(E,11)\rangle  
\end{multline}
In the data sequence given in~\eqref{dataseq}, a few occurrences of episode $A\rightarrow B \rightarrow C$ are  $(1,3,6), (2,3,6), (5,6,10)$.

A {\em fixed interval serial episode} is a serial episode with fixed inter-event gaps. A
fixed interval serial episode is denoted as  $\beta = e_1 \xrightarrow{\Delta_1} e_2 \xrightarrow{\Delta_2} \cdots
\xrightarrow{\Delta_{k-1}} e_k$. We will be considering the class of fixed interval serial episodes, where $\Delta_i
\leq T_g, \forall i$, with $T_g$ being a user specified upper bound on allowable gap. An occurrence of $\beta$ in
$\scrD$ is a mapping $h:\{1\ldots k\}\rightarrow \{1\ldots
n\}$, such that $e_i = E_{h(i)}, 1 \leq i \leq k$ and $t_{h(i+1)} - t_{h(i)} = \Delta_i>0$, for $1 \leq i < k$. For
example, in sequence $\scrD_1$ in \eqref{dataseq}, there are two occurrences of episode $A\xrightarrow{2} B
\xrightarrow{3} C$, namely $(1,3,6)$ and $(5,7,10)$. Note that the time of the first event of an occurrence completely specifies the
entire occurrence. This property of the fixed interval serial episodes allows us to design a coding scheme that results
in data compression. 
A $k$-node fixed interval serial episode $\alpha = e_1 \xrightarrow{\Delta_1} e_2 \xrightarrow{\Delta_2} \cdots
\xrightarrow{\Delta_{k-1}} e_k$ is called \mbox{{\it injective}} if
$e_i \neq e_j, \forall i,j, i \neq j$. 

In the literature, different notions of frequency are defined for episodes depending on the type of occurrences we
count. (For a discussion on various frequencies see \cite{achar2012unified}). An episode is said to be frequent if its 
frequency is above a given threshold. In this paper, we consider the number of distinct occurrences as the
frequency. Two occurrences are distinct if none of the events of one occurrence  is among events of the other. More
formally,  a set of occurrences, $\{h_1, h_2, \ldots, h_m\}$  of an episode $\alpha$ are \mbox{{\it distinct}} if  for
any $k \neq k'$, $h_k(i) \neq h_{k'}(j)$, $\forall i,j$. This is a natural notion of frequency for an injective fixed
interval serial episode because any pair of its occurrences with different start times will always be distinct. 

In this paper, we consider injective fixed interval serial episodes and from now on
we refer to injective fixed interval serial episodes simply as episodes whenever there is no scope for confusion.

\subsection{Selecting a Subset of Episodes Using the MDL Principle}

Pattern mining algorithms often output a large number of frequent episodes. Our goal is to isolate a 
small subset of them which are non-redundant and are relevant for the data. To formalize this goal, 
we use the MDL principle which views learning as data compression. The idea is that if we can discover all the relevant regularities in
the data, then an encoding based on these would result in data compression~\cite{grunwald2007minimum}. Thus, the goal
is to find a model which allows us to encode the data in a compact fashion.

Given any model, $H$, let $L(H)$ denote  the length for encoding the model $H$ and let $L(\scrD|H)$ be the length of the
data when encoded using the model $H$. Given an encoding scheme, under the MDL principle our goal is to find a model $H$
that minimizes total encoded length, $L(H, \scrD) = L(H) +L(\scrD|H)$.
% \footnote{We note here that we take this view of MDL(sometimes called two-part MDL), where the complexity of each $H$ is
% separately accounted for, because our objective is to use this to assess relative merits of different models. }

% The encoded lengths depend on the encoding scheme used. Here we take event types and times to be integers and
% simply assume that each such integer accounts for one unit in the encoded length. We will specify a scheme for encoding
% episodes and for encoding data using episodes and the total encoded length would be in terms of this unit for encoding
% an integer. Since our aim is to compare different models, keeping all lengths in terms of one unit per integer is
% sufficient for us. Later on, while comparing our method with the various other methods, we do use bit level encoding for
% integers. But from an MDL point of view, this is not necessary.

For us,  different models correspond to different subsets of the set of frequent fixed interval serial episodes.
As mentioned earlier, an occurrence of such an episode is uniquely specified by its start time. Hence, by giving the
code for the identity of the episodes and a list of  start times, we can code all the events constituting the occurrences of this
episode. (We explain our encoding scheme in the next section). Thus, large episodes with many occurrences would account for a large number of events
in the data sequence thus
decreasing $L(\scrD|H)$. Another advantage of our use of the MDL principle is that it inherently takes care of redundancy. Selecting
episodes with minimal overlap among their occurrences would help reduce the final encoded length. 

Under MDL, we are looking at lossless coding and hence the occurrences of the selected subset of episodes have to {\em cover} the entire
dataset; i.e., every event in the data sequence should be part of an occurrence of (at least) one of the selected set
of episodes.  We can always ensure this by adding a few 1-node episodes, as needed. We will give details of our encoding
in the next section. Our main problem can now be stated as below
% \vspace{.6cm}
\begin{problem}
\label{prob1}
\sloppypar
Given a data sequence $\scrD$ and a set of (frequent) fixed interval serial episodes,  $\scrC = \{C_1, C_2, \ldots,
C_N\}$, find  the subset $H^* \subseteq \scrC$ such that  
\[H^* = \argmin_{H \subseteq \scrC}  \{L(H) + L(\scrD|H) \}\]
\end{problem}

\section{The Encoding Scheme for Data}
\label{sec:encoding-expln}
In this section, we explain our encoding scheme and derive the expression for encoded data length.

\subsection{Encoding}
Each model $H$  is a set of some fixed interval serial episodes whose occurrences cover the data. Given
such an $H$, which forms the dictionary, the data is then encoded by specifying the start times of selected occurrences
of the episodes. 
\begin{table}
\small
 \caption{A data sequence and its encoding}
 \label{table:encoding}
 $\scrD_2 = \langle(D,1)(A,2)(C,3)(E,3)(A,4)(B,4)(C,5)(D,5)\\(B,6)(C,7)(E,7)(C,8)(C,9)\rangle$\\\\
 \centering
 \begin{tabular}{cccc} \toprule
  Size of&	Episode &No. of & List of\\ 
   Episode & Name & Occurrences & Occurrences \\
  \midrule
  3 &$A\xrightarrow{2} B \xrightarrow{1} C$ &2 &$\langle 2, 4 \rangle$ \\
  3& $D\xrightarrow{2} E \xrightarrow{2}C$ &2 & $\langle 1, 5 \rangle$ \\
  1& $C$ & 2 &$\langle 3,8\rangle	$\\	\bottomrule
 \end{tabular}
 \end{table} 

 We explain our encoding scheme through an example. Table~\ref{table:encoding} shows a data sequence $\scrD_2$ and its encoding using three 
arbitrarily selected episodes. Each row in the table describes one of the episodes used and the encoding for the part of the data, covered by the 
occurrences of that episode. There are four columns in the table. Column 1 gives the size of the episode in that row and the second column specifies 
the episode. The third column gives the number of occurrences of that episode (used for encoding) and the last column gives a list of start times. 
Hence the first row of Table~\ref{table:encoding} specifies: a $3$-node episode, namely, $A\xrightarrow{2} B \xrightarrow{1} C$ and two of its 
occurrences starting at times 2 and 4. Thus, the first row of Table~\ref{table:encoding} accounts for the events $(A,2), (B,4), (C,5)$ and 
$(A,4), (B,6), (C,7)$ in the data, which are the events constituting the two occurrences of the episode $A\xrightarrow{2} B \xrightarrow{1} C$ 
starting at time stamps 2 and 4 respectively. We can think of the first two columns of the table as our dictionary and the last two 
columns of the table as the encoding of the data.  
%For a $k$-sized episode, Column 2 first specifies the $k$ event-types
%of the serial episode and then the $k-1$ fixed inter-event time intervals. Column 3 gives the number of
%occurrences in Column 4, both of which are needed while decoding to recreate the data, given the sequence of integers in
%each row of the table.
Note that the seventh event in $\scrD_2$, $(C,5)$, is part of the occurrence of $A\xrightarrow{2} B \xrightarrow{1} C$
starting at 2 and of $D\xrightarrow{2} E \xrightarrow{2}C$ starting at 1. While this is allowed in our encoding,
minimizing such overlaps would improve total encoded length. In fact, we use injective fixed interval serial episodes in order to avoid overlap of 
different occurrences of the same episode, since, as we mentioned earlier, occurrences of injective fixed interval serial episodes starting at 
different time stamps are distinct.
% Also note that, the injective nature of the episodes, rejects the possibility of overlap of occurrences of the same episode, as different
occurrences 
% starting at different time stamps cannot share the same event due to the fixed inter-event constraint. 
In  Table~\ref{table:encoding}, the first two episodes account for all but two events in the data and hence we added a 1-node
episode (in row 3 of the table) to ensure that we cover the full data sequence. Our final encoded sequence would be a table like this. Each entry 
in the table is essentially a series of integers and our final encoded data would consists of a series of integers obtained by 
stringing together the rows in order. 

\subsection{Decoding}
In this section we discuss how to decode the encoded data. The encoded data consists of rows of a table, with each row  specifying an episode 
 and its occurrences. In each row, we read the first value, which is the size of the pattern. If this value is $k$, the next $2k-1$ integers
correspond to the codes of the event types ($k$ units)  followed by the inter event gaps ($k-1$ units). The next value in the encoded sequence
corresponds to the number of occurrences of the episode. We then need to read that many values to obtain all the occurrence start times of that
episode and complete reading the current row. Since we know when the row is complete, the next integer would be the first entry of the next row and we
repeat the same process as above. From each start time of occurrence of an episode, we can roll out the corresponding events because we know the event
types and the inter-event gaps. Once we are done with rolling out all the occurrences of all the episodes, we have to just sort the events based on
the time stamps and delete duplicate occurrences\footnote{Note that while events of different types can occur with the same time stamp, events with
same event-type cannot co-occur at the same time instant; hence we can easily spot duplicates while decoding.} to retrieve back the original data
sequence.

\subsection{Length of the Encoding}
\label{sec:enc-length}
We have seen that once the dictionary is fixed, the data encoding is  just a series of integers denoting the start times of occurrences of the
patterns in the dictionary. Even though, we could use bit level integer encoding schemes like Elias codes~\cite{lam2014mining, witten1999managing} and
Universal codes~\cite{tatti2012long, rissanen1983universal} for encoding integers (and have the size of encoding dependent on the value of the
integer), we use the notion of fixed memory units instead. The reason being that, the MDL principle looks at utilizing the regularity in data to
compress the data and hence the level of compression should not depend on the magnitude of data item. For example, the value of a time stamp, per se,
does not have any regularity and the compression achieved by the encoding scheme, hence, should not be dependent on the values of time stamps.
Therefore, for calculating the total encoded lengths we consider event types and times to be integers and assume that each such integer accounts for
one unit in the encoded length. Since our aim is to compare different models, keeping all lengths in terms of one unit per integer is sufficient for
us. (Here we are assuming that, in describing episodes in the dictionary, all event-types take the same amount of memory. We could, of course, use
codes such as Hamming codes, to reduce expected length of representation of episodes. We do not consider such extra compression here). Later on, while
comparing our method with the various other methods, we use bit level encoding for calculating lengths of integers so that we can easily compare with
the results of other algorithms.

Let model $H$ contain the episodes $\{F_1, F_2, \ldots, F_K \}$, with $|F_i|$ denoting the size of episode $F_i$. Each episode needs 
one integer to represent its size, 
$|F_i|$ integers for representing the event types and $(|F_i|-1)$ integers for inter-event gaps. Hence the
first two columns of the table need $\sum_{i=1}^K (1+|F_i|+|F_i|-1) = \sum_{i=1}^K 2|F_i|$ integers. This is $L(H)$. 

Let $f_i$ be the number of occurrences of $F_i$ listed in the column 4 of our table. Then, columns 3 and 4 together need
$\sum_{i=1}^K (f_i + 1)$ integers. This is $L(\scrD|H)$. Thus for the model $H$, the total encoded length is 
\begin{equation}
\label{eqn:encoding}
 L(H, \scrD) = L(H) + L(\scrD|H) = \sum_{i=1}^K 2|F_i|+\sum_{i=1}^K (f_i + 1)
\end{equation}
For the encoding given by Table~\ref{table:encoding}, the length for the first row is $1+(3+2)+1+2=9$ and it is easy 
to see that the total encoded length is $9+9+5 = 23$. 

The length of raw data can be taken to be $2|\scrD|$ where $|\scrD|$ is the number of events in the data. 
However, taking this as the length of uncompressed data may result in higher value for the compression achieved 
by an algorithm. This is because, even without finding any patterns, we can represent the raw data more compactly 
 by simply using only 1-node episodes in our encoding. If we have $M$ event types then we use $M$ $1$-node episodes for encoding. The total encoded 
length, using Equation~\eqref{eqn:encoding}, and taking $K=M$ would be $2M+|\scrD|+M$. (Note that $\sum_{i = i}^{M} f_i = |\scrD|$, because all 
occurrences of the $M$ 1-node episodes, together would exactly cover the data; also no event in the data would be part of occurrences of two
different 
$1$-node episodes). We call such an encoding {\it trivial encoding}. For the data sequence $\scrD_2$, the length for trivial encoding would be 
$5\times 2 +13 + 5 = 28$. Even though in this example the length of the trivial encoding is more than $2|\scrD|$, for real datasets we would have 
$|\scrD| \gg M$ and hence total encoded length of trivial encoding would be less than $2|\scrD|$. Hence in calculating data compression with our 
method, we compare the length of the trivial encoding with the length of the encoding using selected episodes. 

\section{Algorithms}
\label{sec:algo}
% We provide two methods of finding the best subset of fixed interval serial episodes. The first method is a two phase
% algorithm consisting of a depthwise mining step of all frequent episode and then selecting a subset based on maximum
% coverage and minimum overlap. The second method directly mines relevant fixed interval serial episodes from the
% sequence.
In this section, we consider algorithms for discovering a subset of episodes which achieves good compression. 
Finding the optimal subset of episodes to minimize total encoded length is known to be NP-Hard~\cite{lam2012mining, tatti2012long}. Hence the 
methods we present here are approximation algorithms to Problem~\ref{prob1}.
We begin by presenting algorithm CSC-1 ({\bf CSC} is for {\bf C}onstrained {\bf S}erial episode {\bf C}oding), which is a two
phase method. This consists of discovering all frequent episodes through a depth-first search algorithm followed by a
greedy method of selecting a subset based on maximum coverage and minimum overlap. We then present algorithm CSC-2, 
which directly mines for relevant fixed interval serial episodes from the data without first discovering all frequent
episodes. 

\subsection{First Algorithm: CSC-1}
We first explain our depth-first mining algorithm and the basis for the greedy strategy for subset selection before
describing the full CSC-1 algorithm (which is listed as Algorithm~\ref{algo:CSC}).

\subsubsection{Mining}
\label{sec:mining}

\begin{algorithm}[t]
  \caption{MineEpisodes($\scrD, T_g, f_{th}$)}
  \label{algo:MineEpisodes}
  \algorithmicrequire Sequence data $\scrD$, the maximum inter-event gap $T_g$ and frequency threshold, $f_{th}$\\
  \algorithmicensure The set of frequent episodes, $\scrC$
  
  \begin{algorithmic}[1]
    \State $\scrA \gets$ Set of all frequent $1$-node episodes in $\scrD$, along with occurrence lists.
    \ForAll {$A \in \scrA$}
      \State $\mbox{\bf{ExploreDFS}}(A, \scrA, T_g, f_{th})$ \Comment{All the frequent episodes } \label{call-explore}
%       \State \Call{ExploreDFS}{$A$, $\scrA$, $T_g$} \Comment{All the frequent episodes }
      \Statex \Comment{ are added to the global list $\scrC$}
    \EndFor
  \end{algorithmic}
 \end{algorithm}

 \begin{algorithm}[t]
 \caption{ExploreDFS($\alpha, \scrA$,$T_g, f_{th}$)}
 \label{algo:ExploreDFS}
 \algorithmicrequire  Episode $\alpha$ with its occurrence list; $\scrA$: the set of frequent one node episodes with
its occurrence lists; $T_g$: Maximum inter-event gap; $f_{th}$: frequency threshold \\
 \algorithmicensure The set of frequent episodes $\scrC$.
 \begin{algorithmic}[1]
    \ForAll {$A \in \scrA \backslash$\{set of event-types in $\alpha$\}}
      \State $\mbox{{\em occurrlist-for-delta}} \gets \mbox{\bf{find-lists}}(\alpha, A, T_g)$	\label{findlist-call}
      \For {$j = 1 \to T_g$}
	\If{$|${\em occurrlist-for-delta}$(j)| \geq f_{th}\times |\scrD|$ } \label{thresh-check}
	  \State $\beta \gets (\alpha \xrightarrow{j} A)$
	  \State $\beta.occurrencelist \gets \mbox{{\em occurrlist-for-delta}}(j)$	
	  \State $\scrC \gets \scrC \cup \beta$
	  \State $\mbox{\bf{ExploreDFS}}(\beta, \scrA, T_g)$ \label{rec-explore-call}
	\EndIf
      \EndFor
    \EndFor
  \end{algorithmic}
 \end{algorithm}

\begin{algorithm}[t]
 \caption{find-lists$(\alpha, A, T_g)$}
 \label{algo:find-lists}
 \algorithmicrequire Episode $\alpha$ and one node episode $A$ with their occurrence lists.\\
 \algorithmicensure The array, {\em occurrlist-for-delta} storing the occurrence lists with different gaps.
 \begin{algorithmic}[1]
  \ForAll {$[t_s^{\alpha}, t_e^{\alpha}] \in \alpha.occurrencelist$}
    \State Let $t^A$ be the first occurrence of $A$ after $t_e^{\alpha}$ \Comment{NULL if no such occurrence}

    \While{$t^A \neq NULL$ and $t^A - t_e^{\alpha} \leq T_g$} \label{findlists-while}
      \State $j \gets t^A - t_e^{\alpha}$; %Add $[t_s^{\alpha}, t^A]$ to {\em occurrlist-for-delta}$(j)$
      \State Add $[t_s^{\alpha}, t^A]$ to {\em occurrlist-for-delta}$(j)$ \Comment{Corresponding to $\alpha \xrightarrow{j} A$}
\label{findlists-addwind}
      \State $t^A \gets$ next occurrence of $A$	 \Comment{NULL if there is no next occurrence}
    \EndWhile
  \EndFor
  \State {\bf{return}} {\em occurrlist-for-delta}
 \end{algorithmic}
\end{algorithm}

To obtain all  frequent injective fixed interval serial episodes, we use a depth-first (also known as pattern-growth) strategy using occurrence 
windows. See ~\cite{avinash2013pattern, MR04} for more details on depth-first strategies using occurrence windows. 
The basic idea is as follows. First we find all 1-node frequent episodes (which are event-types that occur often enough) and for each frequent 1-node 
episode, keep its occurrence list which is a list of event times where the 1-node episode occurs in the data. Let $\alpha$ be an 
episode and suppose we are given  a list  of all its occurrence windows (also called occurrence list). Recall that the occurrence window of an
episode $\alpha$ is an interval $[t_s, t_e]$, where $t_s$ and $t_e$ are the times of the first and last events of $\alpha$ in this occurrence. If we
know all occurrence windows of 1-node episode $A$, then we can easily check whether $[t_s, t_e+j]$ is an occurrence window of $\alpha\xrightarrow{j}
A$. Thus we can easily calculate the occurrence windows of episodes such as $\alpha \xrightarrow{j} A$, for all allowed $(A,j)$, and hence find
frequent episodes of next size. By doing this recursively, we find all frequent episodes. The implementation of this idea is described in
Algorithms~\ref{algo:MineEpisodes} to \ref{algo:find-lists}.

The main function is {\it MineEpisodes}, listed as Algorithm~\ref{algo:MineEpisodes}. This is a wrapper function, which finds, for each event-type $A 
\in \scrA$, where $\scrA$ is the set of frequent 1-node episodes, all the {\em frequent} fixed interval serial episodes with $A$ as the prefix, using 
the {\it{ExploreDFS}} function (in line~\ref{call-explore}). The frequency threshold value, $f_{th}$ is user specified, and is given as a fraction
of the data length (see line~\ref{thresh-check}). 
% For each event-type $A \in \scrA$, such that event type $A$ is not already present in $\alpha$\footnote{This is because we mine for only
% injective episodes}, it calls the recursive function, {\it{ExploreDFS}}, listed as Algorithm~\ref{algo:ExploreDFS}, to
% mine all its frequent right extensions.  

The procedure {\it {ExploreDFS}}, listed as Algorithm~\ref{algo:ExploreDFS}, is a recursive function. Given an  input episode $\alpha$, it finds
all the frequent right extensions of $\alpha$, i.e., all the frequent episodes with $\alpha$ as prefix. For each $A$, the procedure initially finds
the occurrence windows for the episodes $\alpha \xrightarrow{j} A, 1 \leq j \leq T_g$, where $T_g$ is the maximum allowed inter-event gap
(line~\ref{findlist-call}, Algorithm~\ref{algo:ExploreDFS}), by calling the procedure {\it find-lists}, which is explained below.  
The function then recursively goes deeper for each frequent episode, $\alpha \xrightarrow{j} A$ (line~\ref{rec-explore-call}). 

The {\it{find-lists}} procedure (listed as Algorithm~\ref{algo:find-lists}), takes as input, episode $\alpha$ and event type $A$ (which is also a 
1-node episode), and finds the occurrence windows for all the episodes $\alpha \xrightarrow{j} A, j \leq T_g$.
For each occurrence window $[t_s^{\alpha}, t_e^{\alpha}]$ of $\alpha$, it looks
for all the occurrences $t^A$ of $A$ such that the new occurrence window satisfies the maximum inter-event gap
constraint $T_g$ (the condition for while loop in line~\ref{findlists-while}). An occurrence window satisfying the constrain is then added to the
occurrence list corresponding to the episode $\alpha \xrightarrow{j} A$, where $j = t^A - t_e^{\alpha}$ (line~\ref{findlists-addwind}).

Using these algorithms, we get all injective frequent fixed interval serial episodes. We then go on to select the best representative subset. 

% \subsection{Selecting Subset of Episodes and Encoding the Data}
% \label{subsec:codetableselection}
\subsubsection{Selection Strategy}
Given a data sequence $\scrD$, let $\alpha$ be an $N$-node fixed interval serial episode with frequency
$f^{\alpha}_{\scrD}$. We define the score of $\alpha$ in $\scrD$ as
\begin{equation}
\label{eqn:score}
  score(\alpha, \scrD) = f^{\alpha}_{\scrD}N - (2N + f^{\alpha}_{\scrD} + 1)
\end{equation}
Recall that $2N + f^{\alpha}_{\scrD} + 1$ is the total encoded length for encoding all the events that constitute the $f^{\alpha}_{\scrD}$ 
occurrences of $\alpha$. (Recall from Section~\ref{sec:enc-length} that $2N$ is the  length for encoding $\alpha$ and $f^{\alpha}_{\scrD} + 1$ is the
total length for encoding all $f^{\alpha}_{\scrD}$ occurrences of $\alpha$.). 
It is easy to see that $f^{\alpha}_{\scrD}N$ is a lower bound on the encoded data length for trivially encoding all the events in the 
$f^{\alpha}_{\scrD}$ occurrences of (the $N$-node episode) $\alpha$ with 1-node episodes\footnote{We note that this is a lower bound because 
$\alpha$ is an injective episode. When $\alpha$ is an injective episode, no two occurrences of $\alpha$ can share an event and hence
$f_\scrD^\alpha$ occurrences would contain $f_\scrD^\alpha N$ events in the data sequence. If the episodes were non-injective, then there is a
possibility of events being shared by different occurrences of the same episode and hence $f^{\alpha}_{\scrD}N$ would not be the lower bound.}. 

If $score(\alpha, \scrD) > 0$, then $\alpha$ is called a {\it useful candidate} since selecting it can improve encoding
length by at least the value of $score(\alpha, \scrD)$, in comparison to trivial encoding. From Equation~\eqref{eqn:score}, we can easily see that
$score(\alpha, \scrD) > 0$, if $f^{\alpha}_{\scrD} > \frac{2N+1}{N-1}$. 
Thus the episode $\alpha$ will be a useful candidate if $f^{\alpha}_{\scrD} > 5$ for $|\alpha|=2$ and $f^{\alpha}_{\scrD} > 3$ for $|\alpha| \geq 3$.
But selecting any useful candidates would not lead to efficient encoding. For any pair of selected episodes, we also want their occurrences to have
least number of events in common.
Our subset selection procedure for encoding the data is based on greedy selection of episodes whose occurrences cover
large number of events in the data and have low level of overlap with the occurrences of other selected episodes. 

Let $\scrF_s = \{\beta_1, \beta_2, \ldots, \beta_S\}$ be a set of episodes of size greater than one.  Given any such $\scrF_s$, let
Let $L(\scrF_s, \scrD)$ denote the total encoded length of $\scrD$, when we encode all the events which
are part of the occurrences of episodes in $\scrF_s$, by using episodes in $\scrF_s$ as per our encoding scheme and encode the remaining events in 
data, if any,  by episodes of size one. 

Given any two episodes $\alpha,\beta$, let $OM(\alpha,\beta)$ denote the number of events in the data that are covered by occurrences of {\em both} 
$\alpha$ and $\beta$. We call $OM$ the {\it Overlap Matrix}. 

We define, for $\alpha \notin \scrF_s$
\begin{equation}
\label{eqn-overlap}
 overlap\mbox{-}score(\alpha, \scrD, \scrF_s) =  f^{\alpha}_{\scrD}N - \sum_{\beta_i\in \scrF_s} OM(\alpha,\beta_i)- (2N + f^{\alpha}_{\scrD} + 1) 
\end{equation}
Note that $overlap\mbox{-}score(\alpha, \scrD, \scrF_s) = score(\alpha, \scrD) - \sum_{\beta_i\in \scrF_s} OM(\alpha,\beta_i)$ and is another
measure for the gain
in encoding length, when we add $\alpha$ to $\scrF_s$. The measure has an interesting property as explained below.
% It is basically the value left over from $score(\alpha, \scrD)$, after disregarding the events
% for which it may not have {\em exclusively} accounted for. 
\begin{proposition}
  \label{prop:overlap}
 If $overlap\mbox{-}score(\alpha, \scrD, \scrF_s) > 0$, then $L(\scrF_s, \scrD) > L(\scrF_s \cup \{\alpha \}, \scrD)$
\end{proposition}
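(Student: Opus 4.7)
The plan is to compare the two encoded lengths directly by tracking how the table in Section~\ref{sec:enc-length} changes when $\alpha$ is appended to $\scrF_s$. The rows contributed by the episodes of $\scrF_s$ (together with their occurrences) are identical in both encodings, so they cancel in the difference. Thus only two things change: a new row for $\alpha$ must be added, and the residual trivial (1-node) rows that cover the as-yet-uncovered events shrink, because some events that were uncovered by $\scrF_s$ are now covered by $\alpha$.

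First I would account for the cost of the new row. Using the formula from Section~\ref{sec:enc-length}, the row for $\alpha$ contributes exactly $2N$ integers for the dictionary entry plus $f^{\alpha}_{\scrD}+1$ integers for the occurrence list, i.e.\ the quantity $(2N+f^{\alpha}_{\scrD}+1)$ appearing in \eqref{eqn-overlap}. Next I would quantify the saving on the trivial-encoding side. Let $U(\scrF_s)$ denote the set of events in $\scrD$ not covered by any episode of $\scrF_s$; these events account for $|U(\scrF_s)|$ integers (plus, possibly, some 1-node dictionary overhead) in $L(\scrF_s,\scrD)$. Adding $\alpha$ removes from the 1-node rows at least every event that lies in $U(\scrF_s)$ and is also covered by some occurrence of $\alpha$; each such removal saves one integer, and additional savings may come from eliminating a 1-node row altogether when its event type no longer appears uncovered. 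So a safe lower bound on the saving is the number of events covered by $\alpha$ but \emph{not} by $\scrF_s$.

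Because $\alpha$ is injective, its $f^{\alpha}_{\scrD}$ distinct occurrences cover exactly $f^{\alpha}_{\scrD}N$ distinct events of $\scrD$. The events covered by $\alpha$ that are also covered by $\scrF_s$ lie in $\bigcup_{\beta_i\in\scrF_s}\bigl(\text{cov}(\alpha)\cap\text{cov}(\beta_i)\bigr)$, whose cardinality, by the union bound, is at most $\sum_{\beta_i\in\scrF_s} OM(\alpha,\beta_i)$. Hence the number of events covered by $\alpha$ but not by $\scrF_s$ is at least $f^{\alpha}_{\scrD}N-\sum_{\beta_i\in\scrF_s} OM(\alpha,\beta_i)$. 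Combining the cost and the saving,
\begin{equation*}
L(\scrF_s,\scrD)-L(\scrF_s\cup\{\alpha\},\scrD)\ \geq\ f^{\alpha}_{\scrD}N-\sum_{\beta_i\in\scrF_s} OM(\alpha,\beta_i)-(2N+f^{\alpha}_{\scrD}+1)\ =\ overlap\mbox{-}score(\alpha,\scrD,\scrF_s),
\end{equation*}
and positivity of the right-hand side yields the claim.

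The main delicate point is the bookkeeping around the trivial-encoding rows: one has to make sure that (a) $\sum_i OM(\alpha,\beta_i)$ being an \emph{over}count of the doubly-covered events only strengthens the bound rather than weakens it, and (b) injectivity of $\alpha$ is used so that $f^{\alpha}_{\scrD}$ occurrences cover $f^{\alpha}_{\scrD}N$ distinct events, with no internal double counting. Everything else is a mechanical application of the length formula \eqref{eqn:encoding}.
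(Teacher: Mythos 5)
Your proposal is correct and follows essentially the same route as the paper's proof: the cost of the new row is exactly $2N+f^{\alpha}_{\scrD}+1$, the saving is lower-bounded by the number of events covered by $\alpha$ but by no episode of $\scrF_s$, and that number is at least $f^{\alpha}_{\scrD}N-\sum_{\beta_i\in\scrF_s}OM(\alpha,\beta_i)$ via injectivity and the union bound. You merely make explicit two points the paper leaves implicit (the cancellation of the unchanged $\scrF_s$ rows and the role of injectivity in counting $f^{\alpha}_{\scrD}N$ distinct events), which is a welcome but not substantively different elaboration.
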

\begin{proof}
First, note that the difference in encoding will only be in the section of the data, where the encodings 
using $ \scrF_s$ and $\scrF_s \cup \{\alpha \}$ differ. As is easy to see, $\sum_{\beta_i\in \scrF_s} OM(\alpha,\beta_i)$ is an upper bound on the
number of
events of the occurrences of $\alpha$, that are shared with the occurrences of episodes in $\scrF_s$. 
Hence, $f^{\alpha}_{\scrD}N  - \sum_{\beta_i\in \scrF_s} OM(\alpha,\beta_i)$ is a lower bound on the number of events not covered by anyone in
$\scrF_s$ and which are covered by the occurrences of $\alpha$. Hence if we use $\scrF_s$, it takes at least $f^{\alpha}_{\scrD}N -
\sum_{\beta_i\in \scrF_s} OM(\alpha,\beta_i)$ units for encoding these events using size-1 episodes. In contrast, by adding $\alpha$
to the $\scrF_s$, it takes $2N + f^{\alpha}_{\scrD} + 1$ units to encode those occurrences, independent of the number
of events $\alpha$ shares with other episodes. Thus the reduction in encoding length between the two  is at least
$f^{\alpha}_{\scrD}N - \sum_{\beta_i\in \scrF_s} OM(\alpha,\beta_i) - (2N + f^{\alpha}_{\scrD} + 1)$, which is the
$overlap\mbox{-}score(\alpha, \scrD, \scrF_s)$. 
 Hence the result. 
\end{proof}
Proposition~\ref{prop:overlap} says that, with respect to an already selected set of episodes $\scrF_s$, adding an
episode $\alpha$ with $overlap\mbox{-}score(\alpha, \scrD, \scrF_s) > 0$ to the set $\scrF_s$, would only reduce the total length of encoded data.
Our greedy heuristic is to select the episode with the maximum overlap-score. (Note that, by definition, $overlap\mbox{-}score(\alpha, \scrD, \scrF_s)
= score(\alpha, \scrD)$, if $\scrF_s = \emptyset$). 

\subsubsection{CSC-1}
% With the help of Proposition~\ref{prop:overlap}, we now discuss our algorithm CSC-1. 
% Finding the optimal subset of episodes to minimize total encoded length is known to be NP-Hard~\cite{lam2012mining, tatti2012long}. 
The CSC-1 algorithm selects a best subset of the frequent episodes based on minimizing encoding length. The algorithm takes as input, $K$, the maximum
number of episodes (of size greater than 1) in the final selected subset. Thus, it can be used a method to select the `best-$K$' episodes or to select
the best subset to achieve maximum compression (by choosing a very large value of $K$). 

The CSC-1 algorithm for selecting a good subset of (maximum $K$) episodes is listed as Algorithm~\ref{algo:CSC}. The algorithm runs in iterations of 
mining frequent fixed interval serial episodes from the data sequence, then selecting, one by one, a set of good encoding episodes from the mined set 
and finally deleting the occurrences of the selected episodes from the sequence. The process is repeated until we found $K$ good episodes or we
cannot find any episode that can give any gain in encoding. 

In each iteration of the {\it while} loop (lines~\ref{CSC1-while}-\ref{CSC1-endwhile}), we first mine the set of frequent fixed
interval serial episodes, $\scrC$, using the {\it MineEpisodes} procedure, explained in Section~\ref{sec:mining} (line~\ref{cand set}).
We next calculate the $OM$ matrix (line~\ref{FindOverlapCall}) and then calculate the $overlap\mbox{-}score$. 
Each iteration of the {\it repeat} loop (lines~\ref{repeat}-\ref{until}) looks for an episode in the current candidates set, $\scrC$, which has the
highest positive $overlap\mbox{-}score$. If such an episode is found, it is added to the set $\scrF_s$. This greedy strategy is justified by
Proposition~\ref{prop:overlap}. The set $\scrF_s$, thus contains all the episodes selected in the repeat loop. The repeat loop is broken, when no
episode with positive $overlap\mbox{-}score$ exists in the current candidate set or we have selected $K$ episodes(line~\ref{until}). Then all the 
events in the occurrences of the episodes in $\scrF_s$ are deleted from the data (line~\ref{removeevents}). We then once again repeat the process of 
finding frequent episodes from this modified data and selecting a subset of episodes from this episode set.

The while loop runs as long as the selected set size is less than $K$ and it finds at least one episode that increases encoding efficiency. This 
condition is checked in lines~\ref{breakingcondition}-\ref{CSC1-coveringfalse}. When we cannot find any more episodes with positive 
$overlap\mbox{-}score$, we encode the remaining events in the data with 1-node episodes 
(lines~\ref{CSC1-RemEventsEncoding1}-\ref{CSC1-RemEventsEncoding2}).

\begin{algorithm}[t]
 \caption{CSC-1($\scrD, T_g, f_{th}, K$)}
 \label{algo:CSC}
 
 \algorithmicrequire Data sequence $\scrD$; maximum inter-event gap $T_g$; threshold $f_{th}$;
 maximum number of selected episodes $K$.\\
 \algorithmicensure The set of selected frequent episodes $\scrF$
 \begin{algorithmic}[1]
  \State $\scrF \gets \emptyset$
  \State $coveringexists \gets true$
  \While{$coveringexists$ {\bf and} $|\scrF| < K$}  \label{CSC1-while}
    \State  $\scrF_s \gets \emptyset$
    \State $\scrC \gets$ {\bf MineEpisodes}($\scrD, T_g, f_{th}$) \label{cand set}
    \State $OM \gets$ {\bf FindOverlapMatrix}($\scrD, \scrC$)  \label{FindOverlapCall}
    \Repeat \label{repeat}
      \State $\alpha \gets \argmax_{\gamma \in \scrC} overlap\mbox{-}score(\gamma, \scrD, \scrF_s) $
\label{findbest}
      \If {$overlap\mbox{-}score(\alpha, \scrD, \scrF_s) > 0$}
	\State $\scrF_s \gets \scrF_s \cup \{\alpha\}$ %; $\scrC \gets \scrC \backslash \alpha$
	\State $\scrC \gets \scrC \backslash \alpha$
      \ElsIf {$overlap\mbox{-}score(\alpha, \scrD, \scrF_s) \leq 0$ and $\scrF_s = \emptyset$} \label{breakingcondition}
	  \State $coveringexists \gets false$  \label{CSC1-coveringfalse}

      \EndIf
    \Until{$overlap\mbox{-}score(\alpha, \scrD, \scrF_s) \leq 0$ {\bf or } $|\scrF \cup \scrF_s| = K$} \label{until}
%     \Until{($TotOccurrences \geq |\scrD|$)} \label{until}
    
    \State $\scrD \gets \scrD \backslash (occurrences\ of\ \scrF_s)$ \label{removeevents}
    \State $\scrF \gets \scrF \cup \scrF_s$
  \EndWhile 	\label{CSC1-endwhile}
  \State $\scrA \gets$ Size-1 episodes in remaining $\scrD$ \label{CSC1-RemEventsEncoding1}
  \State $\scrF \gets \scrF \cup \scrA$ \label{CSC1-RemEventsEncoding2}
  \State \Return $\scrF$
 \end{algorithmic}
 \end{algorithm}

  The only remaining part in the CSC-1 algorithm is the calculation of the matrix $OM$ (line~\ref{FindOverlapCall}), which we explain now. The
procedure {\it FindOverlapMatrix}, listed as Algorithm~\ref{algo:FindOverlap}, utilizes the occurrence lists for all
the frequent episodes (obtained from Algorithms~\ref{algo:ExploreDFS} and~\ref{algo:find-lists}), to calculate $OM$ matrix by one more pass over data
using the standard Finite State Automata(FSA) based method for tracking episode
occurrences~\cite{mannila1997discovery,laxman2007fast,achar2012unified}.
FSAs in \cite{mannila1997discovery,laxman2007fast,achar2012unified} are used to track occurrences of episodes. Algorithm~\ref{algo:FindOverlap}
uses FSAs for a different purpose since we already have the occurrences of the episodes. 
% We use the FSAs to check the set of episodes accepting each event as we go along the data stream. 
% Here, for each event in the sequence, the FSAs associated with episodes (and hence some times called episode automata) are used to find the set
% of episodes that accept that particular event. 
Here, the FSAs associated with episodes, and hence some times called episode automata, are used to find, for each event $(E_i,t_i)$ in the sequence,
the set of episodes for which this event is part of one of their occurrences. For all such pairs of episodes in that set, the $OM$ matrix is
incremented by 1. 

In Algorithm~\ref{algo:FindOverlap}, each state of an episode automaton specifies the state of a current occurrence and is denoted by $(\alpha, j,
t_s)$, where $\alpha$ is the episode, $j$ is the state of the automaton to which it is expecting to transit (this means that for the current
occurrence of the episode, it has seen events for $\alpha[1]$ to $\alpha[j-1]$ satisfying the inter-event constraints and is waiting for the event to
occur with event-type $\alpha[j]$\footnote{$\alpha[i]$ denotes the $i$th episode event of $\alpha$.}), and $t_s$ denotes the start time of the current
occurrence. An episode automaton corresponding to a $N$ sized episode has $(N+1)$ states. An automaton for episode $\alpha$ is in state $j = 1$
while it is waiting for the event with event-type $\alpha[1]$ and is in state $j = |\alpha|+1$ at the end of its occurrence.

Each event-type, $E$, is associated with a data structure called $waits$. $waits(E)$ contains the list of automaton waiting for the event with
event-type $E$ to occur. Whenever an automaton corresponding to an episode $\alpha$ transits to a state $j$, in wait for an event-type $E$, that
corresponding automaton state $(\alpha, j, t_s)$ is added to $waits(E)$ (lines~\ref{waits-add1} and \ref{waits-add2},
Algorithm~\ref{algo:FindOverlap}). 

For each episode, $\alpha$, as we go along the sequence, the start of one of its occurrences (which we know
apriori) initiates an automaton for the episode $\alpha$. As we parse through the events in $\scrD$ ({\bf for} loop in line~\ref{forloop-event}), each
automaton corresponding to occurrences of the episodes will be waiting for a specific event to come up. 
% (which is determined using the current state of the automata and the start time of the occurrence). 
% When the specific event, $(E_i,t_i)$, arrives in the sequence, all the automata waiting
% for that event, which is obtained from  $waits(E_i)$ (the inner {\bf for} loop) and the conditions in line~\ref{startcheck} and
% line~\ref{constraintcheck}, accepts the event and transits to the next state.
On seeing the event $(E_i,t_i)$, those episodes (automaton) that were waiting for the event type, $E_i$, (which is obtained from  $waits(E_i)$ (the
inner {\bf for} loop)) compares the start times with $t_i$ to see whether the event is part of the current occurrence (line~\ref{startcheck} and
line~\ref{constraintcheck}, Algorithm~\ref{algo:FindOverlap}). If the constraints are satisfied, then the episode is noted as having accepted that
event (in line~\ref{acceptance1} and line~\ref{acceptance2}, Algorithm~\ref{algo:FindOverlap}) and moves to the next state. Finally for those episodes
that have accepted the current event, the corresponding $OM$ counts are incremented (in line~\ref{OM-update}). 

\begin{algorithm}[t]
  \caption{FindOverlapMatrix($\scrD, \scrC$)}
  \label{algo:FindOverlap}
  \algorithmicrequire $\scrD$, the sequence dataset; $\scrC$, the set of frequent multiple node episodes in $\scrD$.\\
  \algorithmicensure $OverlapMatrix$, the $|\scrC| \times |\scrC|$ matrix denoting the number of shared events.
%   in $\scrD$ by pairs of episodes in $\scrC$.
  \begin{algorithmic}[1]
    \ForAll{$\alpha \in \scrC$}
      \State Add automate state $(\alpha, 1, \emptyset)$ to $waits(\alpha[1])$
    \EndFor
    \ForAll{event $(E_i, t_i) \in \scrD$} \label{forloop-event}
      \State $overlaplist \gets \emptyset$
      \ForAll{automata $(\alpha, j, t_s) \in waits(E_i)$}
	\If{$j = 1$}
  	  \If{$\exists$ occurrence of $\alpha$ starting at $t_i$} \label{startcheck}
	    \State Add $\alpha$ to $overlaplist$	  \label{acceptance1}
	    \State Add automata state $(\alpha, j+1, t_i)$ to $waits(\alpha[2])$ \label{waits-add1}
  	  \EndIf
	\Else
	  \If{$(t_i - t_s) = \alpha.\Delta[j-1]$}  \label{constraintcheck}
	    \State Add $\alpha$ to $overlaplist$	  \label{acceptance2}
	    \If{$j \neq |\alpha|$} \Comment{If $j = |\alpha|$, we just retire the automaton}
	      \State Add$(\alpha, j+1, t_s)$ to $waits(\alpha[j+1])$ \label{waits-add2}
	    \EndIf
	    \State Remove $(\alpha, j, t_s)$ from $waits(E_i)$
	  \EndIf
	\EndIf
      \EndFor
      \State Increment $OverlapMatrix[\alpha][\beta]$ by 1, $\forall \alpha, \beta \in overlaplist, \alpha \neq \beta$  \label{OM-update}
    \EndFor
  \end{algorithmic}

 \end{algorithm}
\subsubsection{Complexity of the Algorithm}
There are three main steps in each iteration of Algorithm~\ref{algo:CSC}: the mining step (call to MineEpisodes in
line~\ref{cand set}), the calculation of the
$OM$ matrix (line~\ref{FindOverlapCall}) and the selection step (the repeat loop). 
The mining step involves the generation of occurrence start times of the selected episodes. For each episode $\alpha$, this is done by a single pass
over the occurrences of its prefix subepisode of size $(|\alpha|-1)$ and 1-node suffix subepisode. The number of occurrences is of the order
of $|\scrD|$ and hence for $\scrC$
selected episodes, the mining step takes $O(|\scrC||\scrD|)$.

The $OM$ matrix calculation depends on the number of episodes accepting each event in $\scrD$. Let $K_i$ denote the number of episodes
in $\scrC$ containing event-type $A_i$. Let $P_i$ denote the fraction of the events of type $A_i$ in $\scrD$. Then each event-type $A_i$ occurs
$P_i|\scrD|$ times and each of its occurrence is associated with $O(K_i^2)$ updates of the $OM$ matrix. Hence, the runtime for the calculation  of the
$OM$ matrix is $O((\Sigma_{i=1}^M P_iK_i^2)|\scrD|) = O(|\scrC|^2|\scrD|)$, since $K_i < |\scrC|$. 

The repeat loop involves calculating the $overlap\mbox{-}score(\gamma, \scrD, \scrF_s), \forall \gamma \in \scrC$, which takes $O(|\scrC||\scrF_s|)$
and hence for $|\scrF_s|$ iterations takes $O(|\scrC||\scrF_s|^2)$. Since $|\scrF_s| < |\scrC|$ the total computation for each iteration of our method
is $O(|\scrC|^2|\scrD| + |\scrC|^3)$. Hence the total execution is critically dependent on $|\scrC|$, the number of frequent episodes mined from the
data.

\subsection{An improved algorithm: CSC-2}
% The major bottleneck for the CSC-1 method  is the calculation of $OM$ matrix, which takes $O(|\scrC|^2 |\scrD|)$ time. As the number of mined
% candidate episodes increases, the $OM$ matrix calculation also increases quadratically. 

The runtime of the CSC-1 method as shown in the previous section is $O(|\scrC|^2|\scrD| + |\scrC|^3)$ and hence depends directly on the size of the
mined candidate set of episodes, $|\scrC|$. Thus an increase in the size of $\scrC$ would increase the runtime by a cubic order. CSC-2 addresses
exactly this problem. CSC-2 proceeds exactly as CSC-1 except that the set $\scrC$ (line~\ref{cand set}, Algorithm~\ref{algo:CSC}) that it considers at
each iteration is a small set rather than the full set of frequent fixed interval serial episodes. Earlier, in the {\it MineEpisodes} function, we
have mined for {\it all} frequent episodes in a depth-first manner and gave that as the set $\scrC$. Now, instead of that, we obtain a set of
episodes, each one of which is a best possible episode in one of the paths in the depth-first search tree. Here, the best episode is decided in a
greedy fashion based on its contribution to coding efficiency. Note that the CSC-2 algorithm does not need any frequency threshold to be specified by
the user because we do not mine for `frequent' episodes.

The algorithm CSC-2 is same as Algorithm~\ref{algo:CSC} except that the call to {\it MineEpisodes} (line~\ref{cand set}, Algorithm~\ref{algo:CSC}) is
replaced by a call to a new procedure {\it BestExtensions}. The rest of Algorithm~\ref{algo:CSC} remains the same for both CSC-1 and CSC-2. Hence, we
do not provide separate pseudo code for CSC-2. The procedure {\it BestExtensions} , which selects a set of candidate episodes, $\scrC$, is listed as
Algorithm~\ref{algo:BestExtensions}.
\begin{algorithm}[t]
 \caption{BestExtensions($\scrD$,$T_g$)}
 \label{algo:BestExtensions}
 \algorithmicrequire Data sequence $\scrD$; maximum inter-event gap $T_g$.\\
 \algorithmicensure Selected set of relevant candidate episodes $\scrC$.
 \begin{algorithmic}[1]
  \State $\scrA \gets \mbox{Set of all 1-node episodes in $\scrD$} $
    \ForAll {$A \in \scrA$}
      \State $patt \gets A$
      \State $newpatt \gets patt$
      \While{$(patt \gets Extensions(patt, \scrA, T_g))) \neq NULL$}
	\State $newpatt \gets patt$
      \EndWhile
      \If {$score(newpatt, \scrD) > 0$}
	\State $\scrC \gets \scrC \cup \{newpatt\}$
      \EndIf
    \EndFor
    \State \Return $\scrC$
  \end{algorithmic}
 \end{algorithm}

\begin{algorithm}[t]
 \caption{Extensions$(\alpha, \scrA, T_g)$}
 \label{algo:Extensions}
 \algorithmicrequire  Episode $\alpha$ with its occurrence list; $\scrA$: the set of frequent one node episodes with
its occurrence lists; $T_g$: Maximum inter-event gap.\\
 \algorithmicensure Episode extension with the highest frequency and minimum inter event gap.
 \begin{algorithmic}[1]
    \State $maxfreq \gets \frac{2*(|\alpha|+1)+1}{|\alpha|}$ \Comment{Min frequency for which coding efficiency is}
    \Statex \Comment{achieved}
    \State $mingap \gets T_g$
    \State $maxpatt \gets NULL$
    \ForAll {$A \in \scrA \backslash$\{set of event-types in $\alpha$\}}
      \State $\mbox{{\em occurrlist-for-delta}} \gets \mbox{\bf{find-lists}}(\alpha, A, T_g)$ \Comment{find-lists in Algorithm~\ref{algo:find-lists}}
      \State $bestgap \gets \argmax_{j\leq T_g} |\mbox{{\em occurrlist-for-delta}}(j)|$
      \If {$|\mbox{{\em occurrlist-for-delta}}(bestgap)| \geq maxfreq$} \label{extendcond:1}
	\If {$|\mbox{{\em occurrlist-for-delta}}(bestgap)| > maxfreq$ OR $bestgap < mingap$} \label{extendcond:2}

 	  \State $\beta \gets (\alpha \xrightarrow{bestgap} A)$
	  \State $\beta.occurrlist \gets \mbox{{\em occurrlist-for-delta}}(bestgap)$	
	  \State $\beta.freq \gets |\mbox{{\em occurrlist-for-delta}}(bestgap)|$
% 	  \State $\scrC \gets \scrC \cup \beta$
	  \State $maxpatt \gets \beta$
	  \State $maxfreq \gets \beta.freq$
	  \State $mingap \gets bestgap$

	\EndIf
      \EndIf
    \EndFor
    \State \Return $maxpatt$
  \end{algorithmic}
\end{algorithm}

For each event-type $A$, Algorithm~\ref{algo:BestExtensions} extends the 1-node episode $A$ with an event-type $B$
and gap $i$ to form the episode $A \xrightarrow{i}B$ such that 
% \[A \xrightarrow{i}B = \argmax_{\substack{C,j,\\ C\neq A, j\leq T_g}} score(A \xrightarrow{j}C, \scrD)\] 
\[ <B, i> = \argmax_{\substack{<C,j>,\\ C\neq A, j\leq T_g}} score(A \xrightarrow{j}C, \scrD)\] 
The episodes are grown by right extension until none of the immediate extensions has a better score than the current
episode. At this point we add the episode to the list of candidate episodes $\scrC$. 

The extension algorithm is given in Algorithm~\ref{algo:Extensions}. Among the immediate extensions,
Algorithm~\ref{algo:Extensions} selects the one with the maximum frequency. If multiple
extensions have the same frequency, then the one with the minimum gap is selected (lines~\ref{extendcond:1}
and~\ref{extendcond:2}, Algorithm~\ref{algo:Extensions}).

The number of candidates generated using this procedure is at most $M$, the size of the alphabet. In general, this is
much smaller than the size of $\scrC$ in CSC-1. However, for large alphabet (i.e., large number of event types), even
this could be costly. Then we can modify CSC-2 to keep only the `best few' episodes in $\scrC$.
% For large alphabet
% sets, this could also prove inefficient, in which case, we could also modify the procedure to select the best ``stated
% few'' without any extra cost or much change in the algorithm.

\section{Experiments}
\label{sec:expt}
In this section, we present the experimental results for our method and compare its performance with that of 
SQS~\cite{tatti2012long} and GoKrimp~\cite{lam2014mining}. Since we observed that algorithm CSC-2 gives similar patterns
as CSC-1, but is much more efficient time-wise, we present simulation results with CSC-2 only. 

We consider three different types of data. The first type of data is the conveyor system data, since this application was the motivation for us
to come up with the new subclass of fixed interval serial episodes and our novel encoding scheme. The set of data sequences that we consider are
generated by a detailed simulator of composable conveyor systems. As we mentioned earlier, this is an application area where compression achieved may
be useful on its own. We explain more about the problem and the data in Section~\ref{sec:conveyor}.

The second type of data that we consider is text data. We consider JMLR data set which contains 787 abstracts from the Journal of Machine Learning
Research. This is a sequential symbolic data, but the time stamps, so to say, are just the serial number of the word in the sequence. This dataset is
used to see how the various algorithms perform in unearthing relevant phrases (patterns) related to machine learning research. We show the top 20
patterns found by different methods on this dataset. 

The final collection of data that we consider consists of five real-world data sets introduced in~\cite{moerchen2010robust}. Each of these is
a database of symbolic interval sequences with class labels; i.e., events in these databases are denoted by a symbol and an interval
of its occurrence. As in~\cite{lam2014mining}, we consider the start and end of the interval to be two different events with different event-types.
For example, the event interval $(e,t1,t2)$, where $e$ is the symbol and $[t1,t2]$ is the interval of its occurrence, would be considered as two
events $(e-, t1)$ and $(e+,t2)$. Table~\ref{table:datasummary} gives some relevant details regarding these data sets.
The reason for using these datasets is that these are the only real world data sets on which
results of performance by SQS and GoKrimp are reported. On all these data sets we compare the data compression and the
classification accuracy of our method with that achieved by SQS and GoKrimp.

% The final set of data sequences that we consider are generated by a detailed simulator of composable conveyor systems.
% As we mentioned earlier, this is an application area where compression achieved may be useful on its own. We explain
% more about the problem is Section~\ref{sec:conveyor}.

\begin{table}
  \centering
 \caption{Summary of Datasets.}
 \label{table:datasummary}
  \begin{tabular}{@{}ccccc@{}} \toprule
  Datasets 	& Events & Sequences & Classes &Alphabet Size, $M$  \\ \midrule
   jmlr		&75646& 787 &NA& 3846\\ 
   aslbu		&36500& 441 &7 & 190\\
   aslgt		&178494& 3493 &40& 47\\
   auslan2	&1800&200 & 10& 16\\
   context	&25832 & 240 & 5& 56\\
   pioneer	&9766 & 160 & 3& 92\\
%    skating	&37186&530 & 7&41\\
   \bottomrule
  \end{tabular}
\end{table}

Our algorithms are implemented in C++ and the experiments were executed single threaded on an Intel i7 4-core processor
with 16 GB of memory running over a linux OS. The source code for our algorithms will be made available on request. The
implementations of SQS and GoKrimp were obtained from the respective authors. For the GoKrimp algorithm, there is a
significance level parameter (for a test called sign test), which is set to the default value of 0.01 and the minimum
number of pairs needed to perform a sign test is set to the their default value of 25. For our algorithm, the maximum
inter-event gap was set to 5 for all the datasets. The $K$ value in the CSC-2 algorithm, denoting the 
maximum number of patterns to be selected is set to infinity (which means maximum possible selection of episodes) unless otherwise stated.

\subsection{Results on Composable Conveyor System data sequences}
Before giving the results, we first give a brief explanation about composable conveyor systems~\cite{sks2006,asrr2009}.
\subsubsection{Conveyor System}
\label{sec:conveyor}
Material handling conveyor systems move packages or material units from one or more inputs to specified outputs along predetermined paths. Such
systems are used extensively in manufacturing, packaging, packet sorting etc. A typical conveyor system with two inputs and two outputs is shown
in Fig.~\ref{fig:2I-2O-topo}. This system is composed using instances of Segment and Turn units that each operate autonomously. A Segment moves a
package over a predetermined length over its belt. A Turn is a unit that can serve as a merger or splitter for package flow. Each segment and turn
unit has a predetermined maximum speed of operation and each unit has local sensors and actuators that are used to autonomously control its local
behavior. The simulator we used produces a detailed event-trace of every change of state (event) in the sensors and actuators of each unit as packages
move through the system.

The movement of the packages in the conveyor system is transformed to a temporal 
sequence as follows. We label the exit points of each input, output, turn and segment.
We consider the transfer of a package from one conveyor unit to another as an event. The unit represents the type of the
event and the time at which each event occurs is recorded as the event time.
Hence, when a package moves along a path, an ordered sequence of events is recorded. All the events corresponding to one
package may not be contiguous because during the same time other packages may be moving along other paths. 

% \sloppypar
% Each unit in the conveyor system has a maximum rate, i.e., physical limit at which it 
% can move packages. Congestions occur when packages arrive at a unit at a rate faster 
% than it can be handled. For example, consider the two paths denoted as
% $\langle I1,S1,T1,S4,T3,S7,T4,S8,O2\rangle$ (red dashed line) and $\langle I2,S6,T3,S7,T4,S5,T2,S3,O1\rangle$
% (blue line) in Fig.~\ref{fig:2I-2O-topo}. These paths intersect at the Turn $T3$  and share a common
% sub-path $\langle T3,S7,T4\rangle$. In our simulation runs, turns had a maximum capacity to 
% handle 0.8 packages per second. Thus, if the sum of the input rates from $I1$ and $I2$
% is more than 0.8, then congestion can occur. 

We consider three conveyor system datasets coming from three different topologies. Each topology can
be identified by the various paths from the inputs to the outputs and the package input rates. We consider the
package input arrival rates to be Poisson. The details of the three datasets is given in Table~\ref{tableTopos} and
the structure of the topologies are given in Figures~\ref{fig:2I-2O-topo}, \ref{fig:3I-3O-topo} and
\ref{fig:Packet-Sorter-topo}.

\begin{figure}
  \centering
  \includegraphics[width=0.6\textwidth,height=4.cm]{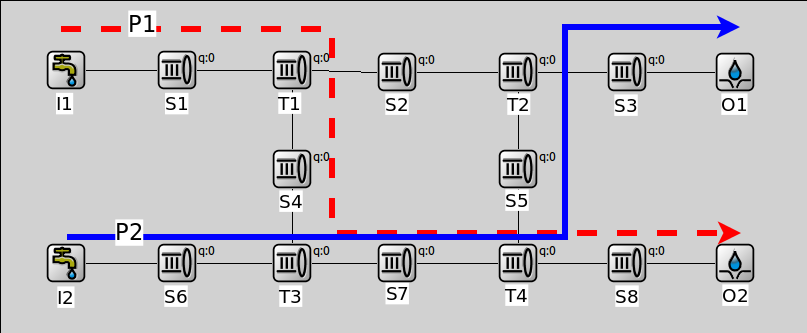}
   \caption{A two-input, two-output conveyor system. Packages enter via 
   one of the inputs $I1$ or $I2$ and leave the system via one of the outputs as illustrated.
   The system is composed using instances of Segments and Turns~\cite{asrr2009,sks2006}.}
  \label{fig:2I-2O-topo}
 \end{figure}

\begin{figure}
  \centering
  \includegraphics[width=0.7\textwidth,height=5.cm]{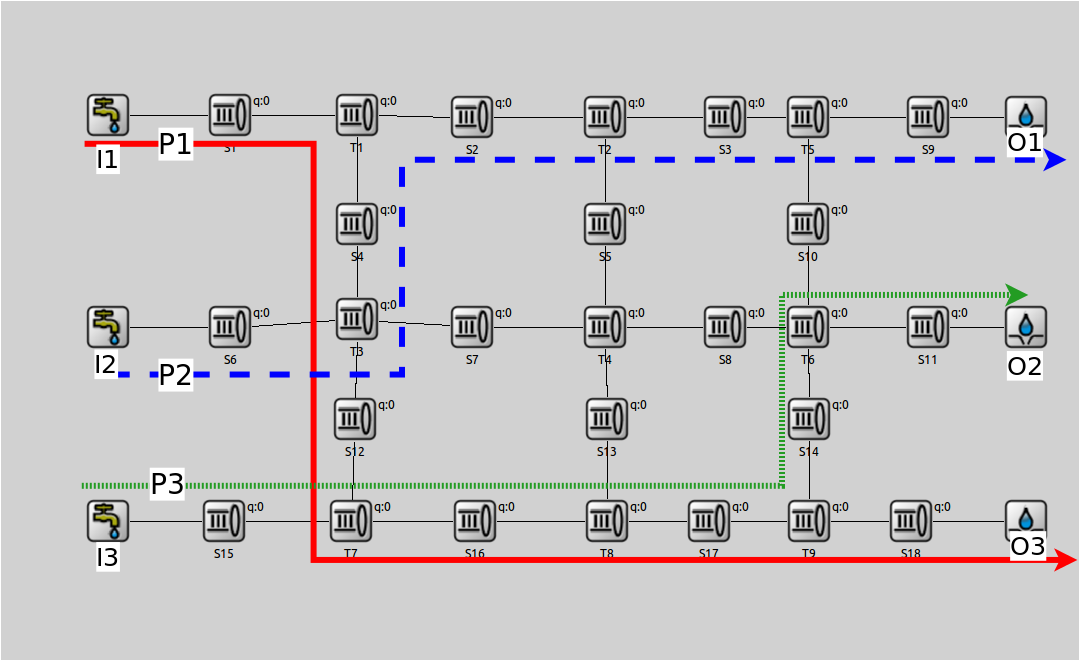}
   \caption{Three-input, three-output conveyor system.}
  \label{fig:3I-3O-topo}
 \end{figure}

 \begin{figure}
  \centering
  \includegraphics[width=0.75\textwidth,height=6.cm]{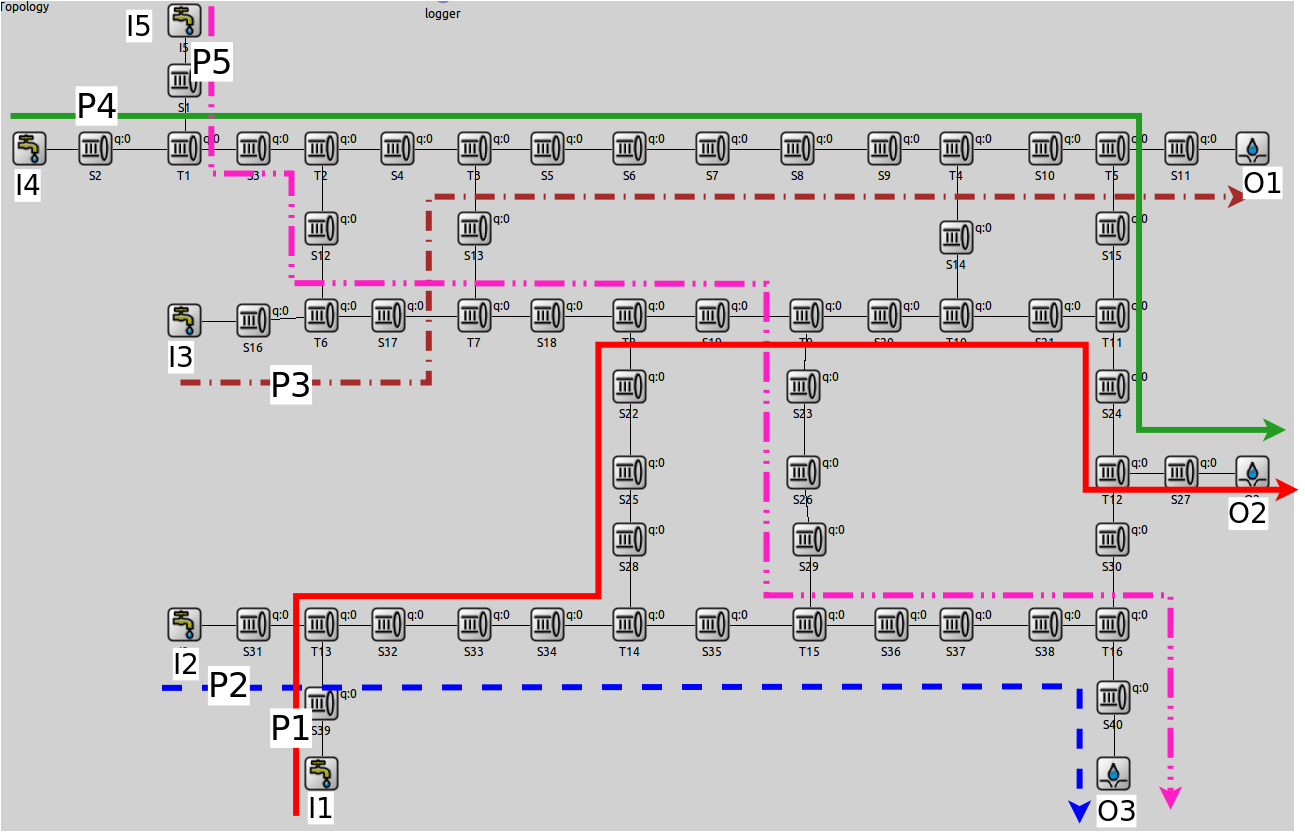}
   \caption{Package Sorter topology.}
  \label{fig:Packet-Sorter-topo}
 \end{figure}
 
\begin{table*}
\small
\footnotesize
% 	\centering
	\caption{Various Conveyor system topologies, input rates, alphabet size $M$, number of events (in the final data set with the topology) and 
the paths in each topology.}
	\label{tableTopos}
% \resizebox{\columnwidth}{!}{%	
\begin{adjustwidth}{-.75in}{-.75in}
      \begin{tabular}{@{}ccccl@{}} \toprule
	Topology & InputRate &$M$ &Events & Paths	\\ 
		 & (Poisson) & & &	\\ 
	\midrule
2I-2O	&0.6 & 16& 7659 	&{\bf P1}. I1 S1 T1 S4 T3 S7 T4 S8 O2 \\
	&  &	&	&{\bf P2}. I2 S6 T3 S7 T4 S5 T2 S3 O1 \\
	\midrule
3I-3O	& 0.4 &33& 12497	& {\bf P1}. I1 S1 T1 S4 T3 S12 T7 S16 T8 S17 T9 S18 O3 \\
	&   & &	& {\bf P2}. I2 S6 T3 S7 T4 S5 T2 S3 T5 S9 O1 \\
	&   & &	& {\bf P3}. I3 S15 T7 S16 T8 S17 T9 S14 T6 S11 O2 \\	
	\midrule
Package & 0.2 & 64 &18489 	& {\bf P1}. I1 S39 T13 S32 S33 S34 T14 S28 S25 S22 T8  S19 T9  S20 T10 S21 T11 S24 T12
S27 O2 \\
Sorter	& &&	& {\bf P2}. I2 S31 T13 S32 S33 S34 T14 S35 T15 S36 S37 S38 T16 S40 O3 \\
	& &&	& {\bf P3}. I3 S16 T6  S17 T7  S13 T3  S5  S6  S7  S8  S9  T4  S10 T5  S11 O1 \\
	& &&	& {\bf P4}. I4 S2  T1  S3  T2  S4  T3  S5  S6  S7  S8  S9  T4  S10 T5  S15 T11 S24 T12 S27 O2 \\
	& &&	& {\bf P5}. I5 S1  T1  S3  T2  S12 T6  S17 T7  S18 T8  S19 T9  S23 S26 S29 T15 S36 S37 S38 T16 S40 O3 \\
	\bottomrule
      \end{tabular}
     \end{adjustwidth}
     
%       }
\end{table*}

For the conveyor system data, there is only one data sequence corresponding to each topology. The GoKrimp algorithm
works using a statistical dependency test called $Sign Test$, which does not work on single sequence datasets. The
implementation of GoKrimp asks such sequences to be broken into a number smaller sequences. Hence, the datasets were
broken into sequences of size 25, 50 and 100 respectively for the 2I-2O, 3I-3O and the Package Sorter Topologies (creating
longer sizes for sequences with longer and higher number of paths). Also, the conveyor system datasets are time
stamped datasets. The SQS and GoKrimp algorithms do not take time stamped data. Hence experiments for SQS and GoKrimp on
these datasets is carried out after removing the time stamps, while retaining the temporal ordering of the symbols.

\subsubsection{Interpretability of Selected Patterns on conveyor system dataset}
Since the Conveyor system dataset consists of movement of packages through specific paths, we expect the good patterns
to be sub paths of these paths. 
The results on the three Conveyor system data sets is given in Table~\ref{table:patt-conveyor}. In all the three data
sequences, our algorithm retrieves episodes corresponding to the sub paths of package flow (see Table~\ref{tableTopos}
for the actual paths). These patterns actually correspond to sub paths where the packages move along the segments and
turns at a fixed speed, without any congestion. The patterns returned by SQS and GoKrimp did not have any particular significance with respect to the
topologies. Even though SQS gave some long patterns, those did not give any information regarding the underlying system of transportation of packages.
\begin{table*}
\small
\footnotesize
% 	\centering
	\caption{Patterns discovered by various algorithm in the Conveyor System data sets. The patterns for the CSC-2
is shown without the inter event gap. PS stands for Package Sorter.}
	\label{table:patt-conveyor}
% % \resizebox{\columnwidth}{!}{%	
\begin{adjustwidth}{-.75in}{-.75in}
      \begin{tabular}{@{}cll@{}} \toprule
	Method & Topology &{Top Discovered Patterns}	\\ 
	\midrule
	&2I-2O &[ S8 T4 S4 S2 T1 ], [ T3 I2 S3 S6 ],  [ I1 O2 S8 ],  [ S6 S4 S7 S5 ], [ S2 T1 S1 ],  [ S8 T4 S4 ]
\\[3pt]	
GOKRIMP	&3I-3O & [ T5 T9 ], [ S10 I1 S14 ],  [ S2 T1 ], [ T3 T6 S18 ],  [ S16 S12 ], [ S8 S6 ] \\[3pt]
	&PS & [ T2 S4 ], [ S27 O2 ],  [ S10 S12 ],  [ S17 S18 ],  [ S19 T10 ],  [ S34 S35 ] \\ \midrule

	&2I-2O &[S3 S8 T2 S7 T4 T3 S6 I2 O1 O2 T4], [O1 S8 T4], [O1 S8 T2 T4], \\ 
	&	& [S3 S8 T2 T4 T3 S6 I2 O1 O2], [T4 T3 S6][S5 S7 T4 T3] \\ [3pt]		
SQS	&3I-3O &[T6 T9], [O2 S18], [O1 S11], [O3 S9], [T8 T7 S7], [T7 T3] \\ [3pt]
	&PS &[T16 S24 ], [S15 S36 ], [T11 S36 ], [T5 S24 ], [T10 T15 ], [S21 S36 ], [T12 S38 ]\\%[S27 T12 ], [S24 S37]
	\midrule

	&2I-2O &[S6 T3 S7 T4 S5 T2 S3 O1], [S1 T1 S4 T3 S7 T4 S8 O2], [S4 T3 S7 T4 S8 O2] \\ 
	&	& [I1 I2], [T3 S7 T4], [T1 I2]\\	[3pt]	
CSC-2	&3I-3O &[S15 T7 S16 T8 S17 T9 S14 T6 S11 O2], [S6 T3 S7 T4 S5 T2 S3 T5 S9 O1]\\
	& 	&[S12 T7 S16 T8 S17 T9 S18 O3], [I1 S1 T1 T3 T7 S16 T8 S17 T9], [T7 S16 T8 S17 T9] \\[3pt]	
	&PS & [T3 S5 S6 S7 S8 S9 T4 S10 T5], [S12 T6 S17 T7 T8 S19 T9 S23 S26 T15 S36 S37 S38 T16 S40 O3]\\
	& & [S39 T13 S32 S33 S34 T14 S28 S25 T8 S19 T9 S20 T10 T11 S24 T12 S27 O2]  \\
& &[T15 S36 S37 S38 T16 S40 O3],  [T13 S32 S33 S34 T14] \\
 & &[S16 T6 S17 T7 T3 S5 S6 S7 S8 S9 T4 S10 T5],  [S1 T1 S3 T2 T6 S17 T7] \\
	\bottomrule
      \end{tabular}
 \end{adjustwidth}
      
%       }

\end{table*}

\subsubsection{Effectiveness and Efficiency of the Algorithms}
% For the conveyor system datasets, we achieve 3 to 4 times compression (Table~\ref{table:compression2}), suggesting that
% our method is well suited for such data sets with events occurring in a sequence with fixed intervals. 
In this section, we compare the effectiveness and efficiency of the methods in terms of compression achieved, the number of patterns returned  and 
the run time on the conveyor system data set. Since the other methods, GoKrimp and SQS, employ a bit level coding scheme, for comparison
purpose, we also calculate the exact bits needed for our encoding scheme, even though it doesn't affect the encoding efficiency as discussed earlier. 
The encoding scheme remains exactly the same except for the specification of the alphabet size $M$ in the beginning of the encoding. Since we only
deal with positive integers, we assume that all the positive integers are encoded using the Elias codes~\cite{lam2014mining}. For a positive
integer, $n$, the Elias Code Length for encoding $n$ is $2\lfloor \log_2(n)\rfloor +1$. Each event-type is encoded using a fixed code size of
$\lfloor\log M\rfloor
+ 1$ bits. It is easy to see that this bit level encoding scheme can also be easily decoded as discussed earlier. 

For each method, the compression achieved is calculated as the ratio of size of encoded data using singleton patterns to the size of encoded data 
using the selected patterns returned by the methods. Table~\ref{table:compression2} shows the compression ratios achieved by the three algorithms on 
the conveyor system data sets. Here, the column CSC-2 lists compression ratio in terms of memory units, while the last column CSC-2(Bitwise) gives the
bit level compression ratio for CSC-2. As can be seen, our compression ratio is better by a factor of 3. As mentioned earlier, remote visualization 
and monitoring of such composable conveyor system is a potential application area of temporal data mining where the compression achieved is also 
important. Our method achieves better compression and also returns a very relevant set of episodes.

\begin{table}
\centering
 \caption{Compression achieved by various algorithms on conveyor system data sequences. }
 \label{table:compression2}
  \begin{tabular}{@{}ccccc@{}} \toprule
   Datasets 	& SQS & GoKrimp & CSC-2 & CSC-2 \\ 
		&     &		&	& (Bitwise) \\\midrule
   2I-2O 	&1.42 & 1.36	& {\bf4.56} & {\bf4.79} \\
   3I-3O		&1.13 & 1.11	&{\bf4.63}  & {\bf4.89} \\	
   PackageSorter&1.04 &	1.015   &{\bf3.34}  & {\bf3.6} \\
   \bottomrule
  \end{tabular}
\end{table}

Table~\ref{table:numpatterns2} shows the number of patterns returned by the three methods and Table~\ref{table:runtime2} shows run times of the three
methods. Ideally for datasets with some inherent regularity in the sequence, a few significant patterns should explain the dataset well. In such
cases, a good method should return a few relevant patterns. For the conveyor system datasets, where there is inherent regularity in the form of packet
flows, CSC-2 returns a few highly relevant patterns, which happen to be sub paths of packet flows. GoKrimp also returns a small number of
patterns, though as seen in Table~\ref{table:patt-conveyor}, they are not relevant episodes. The PackageSorter topology is a complex topology with
lots of intersecting paths. Hence CSC-2 captures more number of patterns here than from other topologies. As shown in Table~\ref{table:patt-conveyor},
the patterns corresponded to sub paths relevant to the topology. But for the same topology, GoKrimp gives the minimum number of patterns.
However none of these patterns are relevant in the sense that they capture good regularities in the underlying system. Hence GoKrimp algorithm fails
to even see any inherent regularity in the datasets (which is also seen by the low compression achieved). SQS returns the highest number of patterns
in all cases, which are all but irrelevant to these
topologies. 

\begin{table}
  \centering
 \caption{Number of Patterns returned by various algorithms on conveyor system data sequences.}
 \label{table:numpatterns2}
  \begin{tabular}{@{}cccc@{}} \toprule
   Datasets 	& SQS & GoKrimp & CSC-2 \\ \midrule
   2I-2O 	&65 & 20	 & 13 \\
   3I-3O		&114 & 22	 & 27 \\
   PackageSorter&140 & 9	 & 77 \\ 
   
   \bottomrule
  \end{tabular}
\end{table}

\begin{table}
  \centering
 \caption{Run times in seconds on conveyor system data sequences.}
 \label{table:runtime2}
  \begin{tabular}{@{}cccc@{}} \toprule
   Datasets 	& SQS & GoKrimp & CSC-2 \\ \midrule
   2I-2O 	&6 & 1	 & 1 \\
   3I-3O		&14 & 1	 & 1 \\
   PackageSorter&20 & 1 & 2 \\ 
   \bottomrule
  \end{tabular}
\end{table}

\subsection{Results on other data sets}
We now discuss the experimental results on the datasets given in Table~\ref{table:datasummary}. We first show the interpretability of the patterns
output by different methods on the JMLR data. In section~\ref{sec:efficiency-realworld}, we compare the methods for efficiency in terms
of compression achieved, runtime and the number of patterns. In section ~\ref{sec:classification}, the effectiveness of the selected patterns from
different methods is analyzed using the patterns as features for classification. 

\subsubsection{Interpretability of Selected Patterns}
Of the six real world data sets, only the outputs from the JMLR data could be analyzed for interpretability.  For the
JMLR data set, we expect key phrases relevant to Machine Learning Research to pop up. We ran the CSC-2 algorithm with $K = 20$ to find the top 20 
selected patterns. For the other two algorithms we just selected the top 20 of the outputted patterns. Table~\ref{table:patt-jmlr} gives the patterns 
obtained on JMLR text data. As can be seen, the patterns returned by all three of the methods are relevant and almost identical. Also we couldn't see 
any redundancy in any of the pattern sets. 

\begin{table*}
\small
\footnotesize
% 	\centering
	\caption{Patterns discovered by various algorithm in the JMLR data.}
	\label{table:patt-jmlr}
% \resizebox{\columnwidth}{!}{%		
\begin{adjustwidth}{-.75in}{-.75in}
      \begin{tabular}{@{}cllll@{}} \toprule
	Method & \multicolumn{4}{c}{Top Discovered Patterns}	\\ 
	\midrule
	&	support vector machin & state art & neural network & well known \\
	&	real world & high dimension & experiment result & special case \\
GOKRIMP	&	machin learn&  reproduc hilbert space & sampl size & solv problem \\
	&	data set &larg scale &supervis learn& signific improv \\
	&	bayesian network & independ compon analysi &support vector &object function \\
	\midrule
	& 	support vector machin &larg scale &featur select &sampl size	\\
        &	machin learn &nearest neighbor &graphic model &learn algorithm \\
SQS     &	state art &decis tree &real world &princip compon analysi	\\
        &	data set &neural network &high dimension &logist regress	\\
        &	bayesian network &cross valid &mutual inform &model select	\\
        \midrule
	&support vector machin &reproduc kernel hilbert space &classif problem & experiment result \\
	&data set &real world &optim problem &gener error	\\
CSC-2	&learn algorithm &model select &loss function &supervis learn \\
	&machin learn &featur select &state art &graphic model \\
	&bayesian network &high dimension &larg scale &propos method \\
	\bottomrule
      \end{tabular}
\end{adjustwidth}
%       }
\end{table*}

\subsubsection{Efficiency of the algorithms}
\label{sec:efficiency-realworld}
In this section, we analyze the efficiency of the methods in terms of compression achieved, the number of patterns
returned and the run time on the data sets listed in Table~\ref{table:datasummary}. Table~\ref{table:compression}
presents the comparison of compression ratio for different methods. Again, the column named CSC-2 lists compression ratio in
terms of memory units and the last column, CSC-2(Bitwise), gives the bit level compression ratio for CSC-2.
As is seen in Table~\ref{table:compression}, CSC-2 offers better compression on almost all the datasets. 

\begin{table}
  \centering
 \caption{Compression achieved by various algorithms.}
 \label{table:compression}
  \begin{tabular}{@{}ccccc@{}} \toprule
   Datasets 	& SQS & GoKrimp & CSC-2 & CSC-2 \\ 
		&     &		&	& (Bitwise) \\\midrule
   jmlr		&1.039& 1.008   & {\bf1.07}  & {\bf1.117}\\ 
   aslbu		&1.155& 1.123 	& {\bf1.17} & {\bf1.24}\\
   aslgt		&1.308& 1.156	& {\bf1.98} & {\bf1.99}\\
   auslan2	&1.571&1.428	& {\bf1.88} & {\bf1.96} \\
   context	&{\bf2.7} & 1.7	&1.95 &1.98 \\
   pioneer	&1.3 & 1.17	&{\bf1.58} & {\bf1.74} \\
%    skating	& {\bf1.88}&1.63	&1.03 & 1.03 \\
   \bottomrule
  \end{tabular}
\end{table}

Table~\ref{table:numpatterns} shows the number of patterns returned by the three methods on real data. For the JMLR data, GoKrimp extracted only 20
patterns, which as we showed in Table~\ref{table:patt-jmlr}, are all relevant. For SQS and CSC-2, the number of selected patterns were quite large.
The initial phrases were all relevant to machine learning, as seen in Table~\ref{table:patt-jmlr}. Later on, for the CSC-2 algorithm, most of the
patterns returned were either relevant or slightly relevant machine learning, optimization and mathematical phrases with a bit of redundancy between
the phrases. Never the less, we could make some sense out of most of the returned phrases.
% Even though, the initial phrases  Also, we could find only very little redundancy in those patterns. 
For the rest of the data sets, the size of the selected set of CSC-2 is in between that of the GoKrimp and SQS methods. We would like to
point that the size of the selected set as such, without considering the relevance of the selected patterns does not have any particular significance.
In Section~\ref{sec:classification}, we discuss this again in relation to classification accuracy.

\begin{table}
  \centering
 \caption{Number of Patterns returned by various algorithms.}
 \label{table:numpatterns}
  \begin{tabular}{@{}cccc@{}} \toprule
   Datasets 	& SQS & GoKrimp & CSC-2 \\ \midrule
   jmlr		&580 & 20 & 765 \\
   aslbu		&195 & 67 & 453 \\
   aslgt		&1095 & 68 & 105 \\
   auslan2	&13 & 4	 & 7 \\
   context	&138 & 33 & 39 \\
   pioneer	&143 &49 & 134\\
%    skating	&140 &49	 & 41 \\
   \bottomrule
  \end{tabular}
\end{table}

Table~\ref{table:runtime} compares the run times. Except for the JMLR text data, our method took less than 2s
for all the other data sets. For the JMLR data, the number of event types $M$ is high (see
Table~\ref{table:datasummary}) and hence the candidate set $\scrC$ for CSC-2 is high, which results in more time for
calculating the $OM$ matrix. On the data set {\it aslgt}, both SQS and GoKrimp take a lot of time because the number of
events in the sequence is very large. However this does not affect our method much.
% The run times should also be compared with the number of patterns returned, which shows that our algorithm takes the
% least per pattern execution time. 
\begin{table}
  \centering
 \caption{Run times in seconds.}
 \label{table:runtime}
  \begin{tabular}{@{}cccc@{}} \toprule
   Datasets 	& SQS & GoKrimp & CSC-2 \\ \midrule
   jmlr		&489 & {\bf2} & 182 \\ 
   aslbu		&196 & 28 & {\bf5} \\
   aslgt		&$>10$Hrs & 1440 & {\bf4} \\
   auslan2	&1 & 1	 & 1 \\
   context	&70 & 36 & {\bf1} \\
   pioneer	&10 &6 & {\bf1}\\
%    skating	&96 &67	 & {\bf1} \\
   \bottomrule
  \end{tabular}
\end{table}

\subsubsection{Classification Results}
\label{sec:classification}
Here we show the classification results on the five labeled datasets in Table~\ref{table:datasummary} (The JMLR data has no class labels). For each
method, we select {\em class specific} patterns using the respective methods and merge them along with the set of all singletons (which are 1-node
episodes) to form features/attributes for classification. Thus, for classification, each sequence is represented by a feature vector consisting of the
number of occurrences of each of the (selected) patterns together with the counts of the singletons in the sequence . 
% Hence the data representation is the feature vector representing the number of occurrences of each (selected) pattern in each
% sequence. 
We also show the results, when only singletons are used as features. We call this method as {\it Singletons} in the tables. 
In~\cite{lam2014mining}, GoKrimp was compared with closed mining episode method, BIDE~\cite{wang2004bide}, and one of
their two step (mining and then subset selection) methods, SEQKrimp~\cite{lam2014mining, lam2012mining}. They used different classifiers  and found
that the linear SVM classifier was giving the best results for almost all datasets. And for all the datasets, they showed that the top patterns
returned by the SeqKrimp and GoKrimp were giving better classification accuracy than the top patterns returned by the BIDE algorithm. For our study,
we thus use the linear SVM as the classifier using the libsvm package~\cite{CC01a}. Since SeqKrimp and GoKrimp were giving similar results, we only
use GoKrimp for this comparison study as SeqKrimp is computationally very intensive. 

The fixed interval serial episode patterns returned by CSC-2 are used as features in two different ways. In the first method, features are the counts
of the selected fixed interval serial episodes along with the counts of the singletons. In the second approach, we drop the fixed inter-event
constraints from the episodes and treat them as normal serial episodes (and care is taken to avoid multiple representations of the same serial
episode). The counts of these serial episodes along with the singleton counts would be the feature representation of the sequences. The first method
is called as {\it CSC-2} and the second method is called {\it CSC-2$^*$} in the results table.

% \begin{table}
%   \centering
%  \caption{Best Percentage Classification accuracy achieved by various algorithms on varying the C parameter of the SVM.}
%  \label{table:class-accuracy1}
%   \begin{tabular}{@{}ccccccccccc@{}} \toprule
%   Datasets & \multicolumn{2}{c}{SQS}& \multicolumn{2}{c}{GoKrimp} & \multicolumn{2}{c}{CSC-2} & \multicolumn{2}{c}{CSC-2$^{*}$} & 
% \multicolumn{2}{c}{Singletons} \\
%  & \multicolumn{2}{c}{Accuracy} & \multicolumn{2}{c}{Accuracy} & \multicolumn{2}{c}{Accuracy} & \multicolumn{2}{c}{Accuracy} &
% \multicolumn{2}{c}{Accuracy} \\
%   & Mean & $\sigma$ & Mean & $\sigma$ & Mean & $\sigma$ & Mean & $\sigma$ & Mean & $\sigma$ \\
% 
%   \midrule
%    aslbu		&71.81 & - 	& 69.3 	& - &70.68 & - 	&71.13 & - 	& 71.13 & -	\\
%    aslgt		&86.05 & -	& 85.5	& - &85.94 & - 	&86.57 & - 	& 82.25 & -	\\
%    auslan2	&36    & -	& 32.5	& - &34.5  & - 	&35    & - 	& 34.5 	& -	\\
%    context	&95.8  & -	& 92.9	& - &94.17 & - 	&94.58 & - 	& 94.58 & -	\\
%    pioneer	&100   & -	& 100	& - &100   & - 	&100   & - 	& 100 	& -	\\
% %    skating	& {\bf1.88}&1.63	&1.03 & 1.03 &1.155& 1.123 	& {\bf1.17} & {\bf1.24}\\
%    \bottomrule
%   \end{tabular}
% \end{table}

\begin{table}
  \centering
 \caption{Percentage Classification accuracy achieved by various algorithms on 20 runs.}
 \label{table:class-accuracy}
  \begin{tabular}{@{}ccccccccccc@{}} \toprule
  Datasets & \multicolumn{2}{c}{SQS}& \multicolumn{2}{c}{GoKrimp} & \multicolumn{2}{c}{CSC-2} & \multicolumn{2}{c}{CSC-2$^{*}$} & 
\multicolumn{2}{c}{Singletons} \\
 & \multicolumn{2}{c}{Accuracy} & \multicolumn{2}{c}{Accuracy} & \multicolumn{2}{c}{Accuracy} & \multicolumn{2}{c}{Accuracy} &
\multicolumn{2}{c}{Accuracy} \\
  & Mean & $\sigma$ & Mean & $\sigma$ & Mean & $\sigma$ & Mean & $\sigma$ & Mean & $\sigma$ \\

  \midrule
   aslbu		&70.08 & 0.68 	& 68.57 	& 0.83 	&70.02 	& 0.94 	&{\bf71.15} 	& 0.75 	& 69.7 	& 1.1	\\
   aslgt		&{\bf86.12} & 0.18	& 85.55	& 0.23 	&85.9 	& 0.15 	&{\bf86.1} 	& 0.19 	& 82.17 & 0.19	\\
   auslan2	&{\bf35}    & 1.23	& 32.7	& 1.6 	&{\bf35.18}  & 1.2 	&34.05  & 1.13 	& 32.58 	& 1.7	\\
   context	&{\bf94.02} & 0.94	& 93.83	& 0.61 	&93.7 	& 0.89 	&{\bf94} 	& 0.65 	& {\bf94.02} & 0.58	\\
   pioneer	&100   & 0	& 100	& 0 	&100   	& 0 	&100   	& 0 	& 100 	& 0	\\
%    skating	& {\bf1.88}&1.63	&1.03 & 1.03 &1.155& 1.123 	& {\bf1.17} & {\bf1.24}\\
   \bottomrule
  \end{tabular}
\end{table}

\begin{table}
  \centering
 \caption{The average number of features per data set for each method. The number (except for the Singletons) is the number of non-singleton
extracted patterns which represent the feature set along with the singletons.}
 \label{table:classify_numpatterns}
  \begin{tabular}{@{}ccccc@{}} \toprule
   Datasets 	& SQS & GoKrimp & CSC-2 & Singletons \\ \midrule
   aslbu		&70 & 17 & 23 & 263\\
   aslgt		&1154 & 418 & 923 & 94 \\
   auslan2	&15 & 0	 & 6 & 23\\
   context	&150 & 64 & 42 & 107\\
   pioneer	&132 &24 & 121 & 184\\
%    skating	&96 &67	 & {\bf1} \\
   \bottomrule
  \end{tabular}
\end{table}
For each dataset, the results were obtained by averaging 20 repetitions of 10-fold cross-validation. The results are shown in
Table~\ref{table:class-accuracy}. The table gives the mean and standard deviation ($\sigma$) of the classification accuracy. 
% The column named CSC-2$^{*}$gives the classification accuracies, when we use fixed interval serial episodes only for selection of the patterns.While
% representing the sequences for classification, we remove the inter-event constraints of these patterns and treat them as normal serial episodes and
% take their counts in the sequences. 
We see that, SQS marginally outperforms the other methods for the {\it aslgt} and {\it context} datasets and the two CSC-2 methods have higher
accuracies for the {\it aslbu} and {\em auslan2} data sets, respectively. In all the datasets, however, the accuracies are similar for SQS and CSC-2.
In comparison, the accuracies of the GoKrimp method are slightly lower than both the SQS and CSC-2 methods, except for the pioneer
data, where none of the methods seem to misclassify.
Table~\ref{table:classify_numpatterns} shows the number of patterns selected by each method as features and we see that the number of features
selected by SQS is far higher than other methods. And we noticed that, even though the performance of SQS was only slightly better than CSC-2,
the run time for the experiments was at least five times longer than that of CSC-2. On the other hand, GoKrimp selects comparatively the lowest number
of patterns (except for the context data set) and also has the lowest accuracy among the three methods. For the auslan2 dataset, GoKrimp couldn't
extract any pattern and hence the classification accuracy is similar to $Singletons$ (the slight difference is due to difference in the Cross
Validation splits for different runs).
It is also interesting to see that the $Singletons$ method, which consists of only the 1-node counts are always close to the best results. But
nevertheless, the table shows that the selected patterns
from different methods have contributed to the increase in accuracy. 

% The reasonably good performance of CSC-2 even in such general datasets indicates two
% things. Firstly, the selection of few small compressing (see Table~\ref{table:compression}) fixed interval serial episodes suggests that there could
% be some fixed gap regularity between the various events in the datasets. Secondly, the results suggests that the selected fixed interval serial
% episodes are highly relevant.  
\subsection{Discussion}

The results presented here show that CSC-2 is a good method for finding a subset of patterns that achieve good compression. It is also seen that
these patterns that achieve good compression are also highly relevant for the problem. For the conveyor system
datasets, our method was shown to perform extremely well in pulling out patterns representing the stable flow of items and achieving great
compression. Both the aspects are of great importance for remote monitoring of such systems as discussed earlier. The other methods have
failed in doing so. On these datasets, the other algorithms, namely SQS and GoKrimp failed to find patterns that capture the package flow and they
also could not achieve much data compression.  

We have also tested out method on some real world data sets. The CSC-2 algorithm discovered subsets of episodes that result in better data
compression compared to the  other methods and our algorithm also seems faster than other algorithms for most of the datasets. The subset of patterns
identified by CSC-2 is also seen to be very effective in classification scenarios. 
% For the JMLR data, CSC-2 was able to extract non-redundant phrases highly relevant to machine learning literature. Our method achieved great
% compression for almost all the real world data sets as compared to other methods with the least amount of runtime. For the classification scenario,
% even with the restricted serial episode definition, the method achieved accuracies very close to the best results, if not always better. 

Recall that the patterns used by CSC-2 are fixed interval serial episodes. Such episodes, as we have seen, suited the conveyor system
data sets, where sequential occurrences of events follow such a fixed gap mechanism. For the other sequential datasets used for classification, such
constraints may not be really relevant. But even on these data sets, CSC-2 identifies a subset of patterns that result in both data compression as
well as better performance in classification. This shows that our pattern structure is not particularly restrictive and it is useful on a variety of
data sets. 

% \vspace{.3cm}
\section{Conclusion}
\label{sec:conclude}
Frequent episodes discovered from sequential data are supposed to give us good insights into the characteristics of the 
data source. However, in practice, most mining algorithms output a large number of highly redundant episodes. Isolating a small 
subset of episodes that succinctly characterize the data is a challenging problem. 
In this paper, we presented an MDL based approach for this problem. Using the interesting class of fixed interval serial
episodes and a novel data encoding scheme, we presented a method to discover a subset of highly relevant episodes.
In contrast to methods in~\cite{lam2012mining,tatti2012long,lam2014mining}, our method achieves good data compression,
while being able to work with event sequences with time stamps.

We compared our method with SQS and GoKrimp on text data and also on a number of real world data sets which were used
earlier in temporal data mining. On all these data sets, our method is good in comparison to others, both in terms of
compression and run time. For the classification scenario, our method was only slightly less effective than SQS but better than GoKrimp. But we
achieved it with far fewer patterns and very low run times.

In this paper, we also briefly discussed a novel application area for sequential pattern mining. This is the
composable conveyor system. We presented empirical comparison of our method with that of others on three data sets from
this problem domain to demonstrate both the effectiveness and efficiency of our method. 

In this paper, we have not attempted any statistical analysis of our method so that we can relate the data compression to some measure of statistical
significance of the pattern subset isolated by our method. This is an interesting and challenging direction to extend the work presented here. We
would be exploring this in our future work.

% \bibliographystyle{abbrv}
% \bibliography{papers}

\end{document}